\documentclass[10pt]{article}
\usepackage{amsmath,amssymb,amsthm}
\usepackage{mathtools}
\usepackage{graphicx}
\usepackage{natbib}
\usepackage{hyperref}
\usepackage{geometry}
\usepackage{setspace}
\usepackage{enumitem}
\usepackage{bm}
\usepackage{dsfont}           
\usepackage{xcolor}
\usepackage{caption}
\usepackage{comment}
\usepackage{subcaption}
\usepackage{booktabs}
\usepackage{array}
\usepackage[T1]{fontenc}
\usepackage{float}
\theoremstyle{plain}
\newtheorem{theorem}{Theorem}
\newtheorem{Result}{Result}

\newtheorem{lemma}{Lemma}

\newcommand{\Var}{\mathrm{Var}}
\newcommand{\Cov}{\mathrm{Cov}}

\title{The evolution of cooperation under imperfect phenotypic recognition}

\author{Dhaker Kroumi$^1$\footnote{Author for
correspondence: dhaker.kroumi@kfupm.edu.sa} 
\\$^1$Department of Mathematics and Statistics\\King Fahd University of Petroleum and Minerals\\Dhahran 31261, Saudi Arabia\\
 }
\date{}
\begin{document}
\maketitle


\begin{abstract}
Phenotypic similarity is a classical mechanism for the evolution of cooperation. Most existing models assume a binary rule in which individuals cooperate only with others of the same phenotype. This assumption is biologically restrictive, since recognition and discrimination are often gradual rather than all-or-nothing. In this paper, we extend the multidimensional phenotype-space model of cooperation by allowing the probability of helping to decline with phenotypic distance. In a large population under weak selection with mutation in both strategy and phenotype, we derive a generalized threshold for selection to favor the abundance of cooperation and express it using a new Laplace-type transform.
We show that this threshold decreases strictly as discrimination becomes sharper, meaning that exact phenotype matching is the most favorable limit within this family of recognition rules. We also show that the threshold decreases strictly with phenotype-space dimension, meaning that higher-dimensional phenotype spaces promote cooperation even when recognition is imperfect. Our asymptotic analysis further shows that low phenotype mutations strongly inhibit cooperation, whereas sufficiently high phenotype mutations drive the threshold toward its minimal value. However, high strategy mutation makes cooperation harder to maintain.
\end{abstract}

\textbf{Keywords:} Evolution of cooperation; Phenotypic similarity; Prisoner's Dilemma;
Graded discrimination; Coalescent process; Weak selection

\section{Introduction}

The emergence of cooperation among selfish individuals remains one of the central questions in evolutionary game theory \citep{AxelrodHamilton1981,MaynardSmith1982}. In the Prisoner’s Dilemma, a cooperator pays a cost to provide a benefit to its partner, whereas a defector pays no cost and provides no benefit. In unstructured populations, natural selection favors defection even though cooperation can improve collective welfare \citep{Nowak2006}. Thus, the persistence of cooperation is often explained by mechanisms that generate positive assortment among cooperators, including kin selection \citep{Hamilton1963,Hamilton1964a,Hamilton1964b}, repeated interaction \citep{Trivers1971,Axelrod1984,NowakSigmund1998,BoydRicherson2005}, and population structure \citep{Wilson1987,NowakMay1992,Ohtsuki2006}. These mechanisms suggest that cooperative acts cannot be indiscriminate. Instead, they must be directed toward individuals who are likely to reciprocate or who share the same cooperative tendency.

Phenotypic similarity provides a natural basis for positive assortment. When cooperators direct help toward individuals with similar phenotypes, cooperation can persist through a mechanism that is closely related to the green-beard effect \citep{Dawkins1976,GardnerWest2010}. This idea has inspired a large body of work on tag-based cooperation, where social behavior depends on heritable visible markers rather than on direct information about genealogy \citep{Riolo2001,Axelrod2004,TraulsenNowak2007}. In these models, the tag or phenotype usually has no intrinsic fitness value. Its evolutionary role lies in the assortment created among individuals with similar behavioral tendencies. Related studies have also shown that similarity-based cooperation can produce rich dynamics, including segregation in phenotype space and continual turnover of cooperative types \citep{TraulsenClausen2004,Jansen2006}.

For finite populations, \citep{Antal2009c} have analyzed a model of a phenotype evolving according to a Wright-Fisher where each individual is characterized by a phenotype given by an integer and a strategy among $C$ for cooperation or $D$ for defection. Cooperators cooperate exclusively with others sharing their phenotypic marker, otherwise, they will defect. During reproduction, an offspring may undergo phenotypic mutation, under which  it shifts its phenotype by one step to the left or right. \citep{Kroumi2015} extended this framework to a phenotype space given by the $n$-dimensional lattice $\mathbb{Z}^n$, suggesting that increasing the number of phenotypic dimensions lowers the threshold and thereby facilitates the evolution of cooperation.

Within this framework, however, the interaction rule has relied on exact phenotype matching: a cooperator helps another individual if and only if both share the same phenotype. Although this assumption is analytically convenient, kin-recognition systems are often based on multiple cues and need not provide an exact readout of genealogical similarity \citep{Grafen1990,PennFrommen2010}. More generally, recognition can depend on learning, prior association, and environmental or social context, rather than on a fixed binary mechanism \citep{TangMartinez2001,Hepper2011}. Theoretical work has also emphasized that marker-based recognition is subject to important evolutionary constraints, which further suggests that exact and perfectly reliable discrimination should not be treated as the default case \citep{RoussetRoze2007}. Exact recognition is therefore better viewed as a useful limiting case than as the biologically typical scenario.

Empirical and theoretical work increasingly suggests that recognition in green-beard systems is not always strictly binary. In budding yeast, \citep{Smukalla2008} showed that the \textit{FLO1} gene drives cooperative flocculation through preferential self/non-self adhesion and that this trait is variable across strains, while later work showed that recognition error can affect its evolutionary stability \citep{Choi2023}. In social amoebae, \citep{Gruenheit2017} identified a polychromatic locus showing that partner-specific cooperation can depend on compatibility across multiple recognition types rather than on exact identity alone. At the conceptual level, \citep{GardnerWest2010,Madgwick2019} emphasize that real green-beard systems need not conform to a simple picture of perfectly binary recognition, and related theory on multicolored variants reaches a similar conclusion \citep{Jansen2006,BiernaskieEtAl2013}.

In this paper, we model graded discrimination using an exponential distance-decay kernel, a simple and analytically convenient choice within the broader family of distance-based kernels widely used in ecology and dispersal modeling \citep{KotLewisvanDenDriessche1996,Clark1999,NathanEtAl2012,RogersEtAl2019}. This form provides a natural way to represent interactions that weaken smoothly with phenotypic distance while remaining tractable in multidimensional space. We then derive an explicit threshold condition expressed through a generalized Laplace-type transform and study how it depends on discrimination sharpness, the dimension of the phenotype space, and mutation rates.

The rest of the paper is organized as follows. Section \ref{sec1} introduces the model and the graded phenotypic discrimination rule. Section \ref{sec2} derives the condition to favor the abundance of cooperation and expresses the resulting threshold in terms of a generalized Laplace-type transform. Section \ref{sec3} investigates the role of discrimination sharpness, and Section \ref{sec4} studies the dependence of this threshold on phenotype-space dimension. Section \ref{sec5} is devoted to the main asymptotic regimes of the mutation parameters, while Section \ref{sec6} extends the analysis to general symmetric two-player games. The technical derivations and proofs are given in Appendices \ref{AppendixA}--\ref{AppendixD}. We discuss the biological interpretation of the results and their relation to earlier work in Section \ref{sec8}. 

\section{\label{sec1}Model}
Consider a finite haploid population of size $N$ evolving according to a Wright–Fisher model. Number these individuals from $1$ to $N$.
Each individual $k\in\{1,\,2,\ldots,\,N\}$ is characterized by:
\begin{itemize}
\item \textbf{Social Strategy}: $S(k) \in \{C, D\}$, where $C$ and $D$ represent cooperation and defection, respectively.
\item \textbf{Phenotype}: An $n$-dimensional position $\mathbf{y}(k) = (y_1(k), \dots, y_n(k)) \in \mathbb{Z}^n$ in the phenotype space.
\end{itemize}
We assume that every individual can interact with every other individual in the same generation.

Now consider an interaction between individuals $k$ and $j$. If individual $k$ is a defector, then it defects regardless of the phenotypic distance to $j$. If individual $k$ is a cooperator, then it cooperates with individual $j$ with probability
\[
\varphi(k,j)=e^{-\alpha d(k,j)},
\]
where \[
d(k,j)=\|\mathbf{y}(k)-\mathbf{y}(j)\|_1=\sum_{m=1}^n |y_m(k)-y_m(j)|
\] 
is the Manhattan distance. 
This metric is appropriate because it quantifies the total number of coordinate-wise steps that separate the two phenotypes. Individual $k$ will defect, with the complementary probability $1-\varphi(k,j)$. In particular, if $d(k,j)=0$, then individual $k$ will cooperate in interaction with individual $j$ with probability $1$. If $d(k,j)>0$, cooperation will occur with probability $\varphi(k,j)$, which decreases as the phenotypic distance increases. The quantity $\varphi(k,j)$ may be interpreted as the probability that a cooperator $k$ does not recognize the phenotypic dissimilarity with individual $j$, whereas $1-\varphi(k,j)$ is the probability that it recognizes the dissimilarity and withholds cooperation. 

The parameter $\alpha > 0$ quantifies the sharpness of discrimination. Higher values of $\alpha$ indicate a highly selective environment in which cooperation is concentrated among individuals with similar phenotypes. In contrast, lower values of $\alpha$ characterize a diffuse model in which cooperation extends across a broader phenotypic range.
Figures \ref{fig:lattice} and \ref{fig:kernel_profiles} provide graphical illustrations of these concepts.

\begin{figure}
  \centering
  \includegraphics[width=\textwidth]{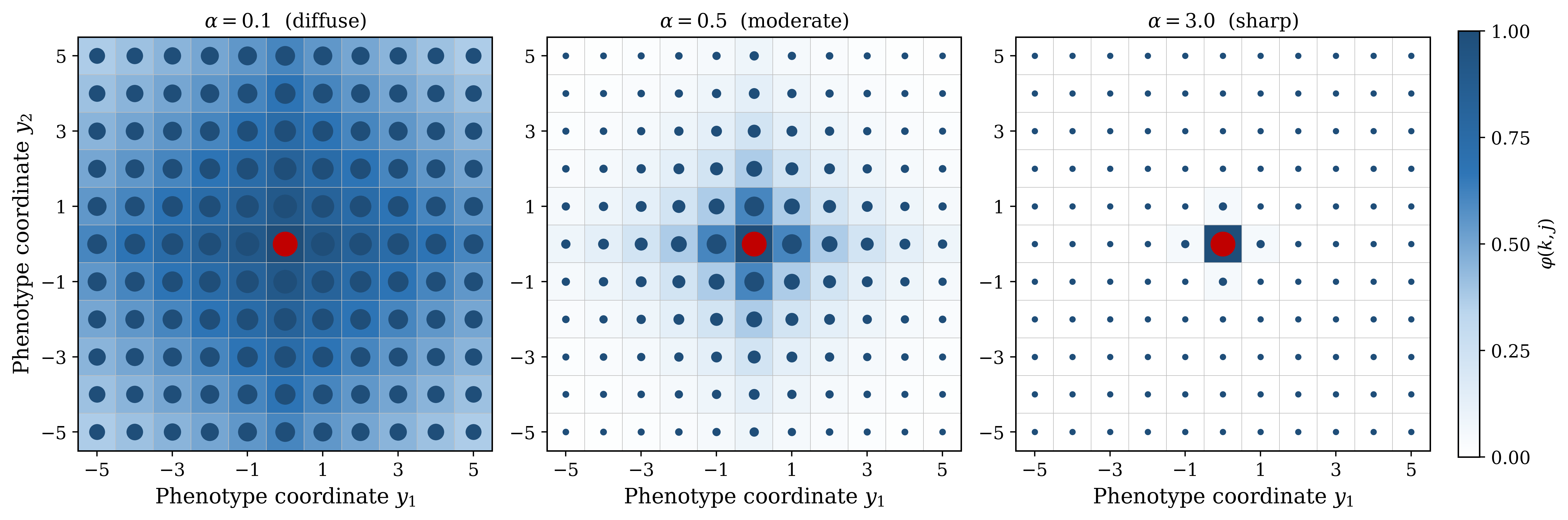}
  \caption{Illustration of the cooperation kernel on the two-dimensional phenotype lattice $\mathbb{Z}^2$. The focal cooperator (red dot) can interact with all neighboring phenotypes. The size of each node and the background shading show the probability $\varphi(k,j)$ of cooperation. As the discrimination parameter $\alpha$ increases, the cooperative tendency of individual $k$ becomes more localized with individuals of similar phenotype: for $\alpha=0.1$ (\emph{left}), cooperation is spread broadly across the lattice; for $\alpha=0.5$ (\emph{center}), it decays with phenotypic distance; and for $\alpha=3$ (\emph{right}), it is concentrated on the closest phenotypic neighbors.}
  \label{fig:lattice}
\end{figure}

\begin{figure}
  \centering
   \includegraphics[width=\textwidth,height=7cm,keepaspectratio]{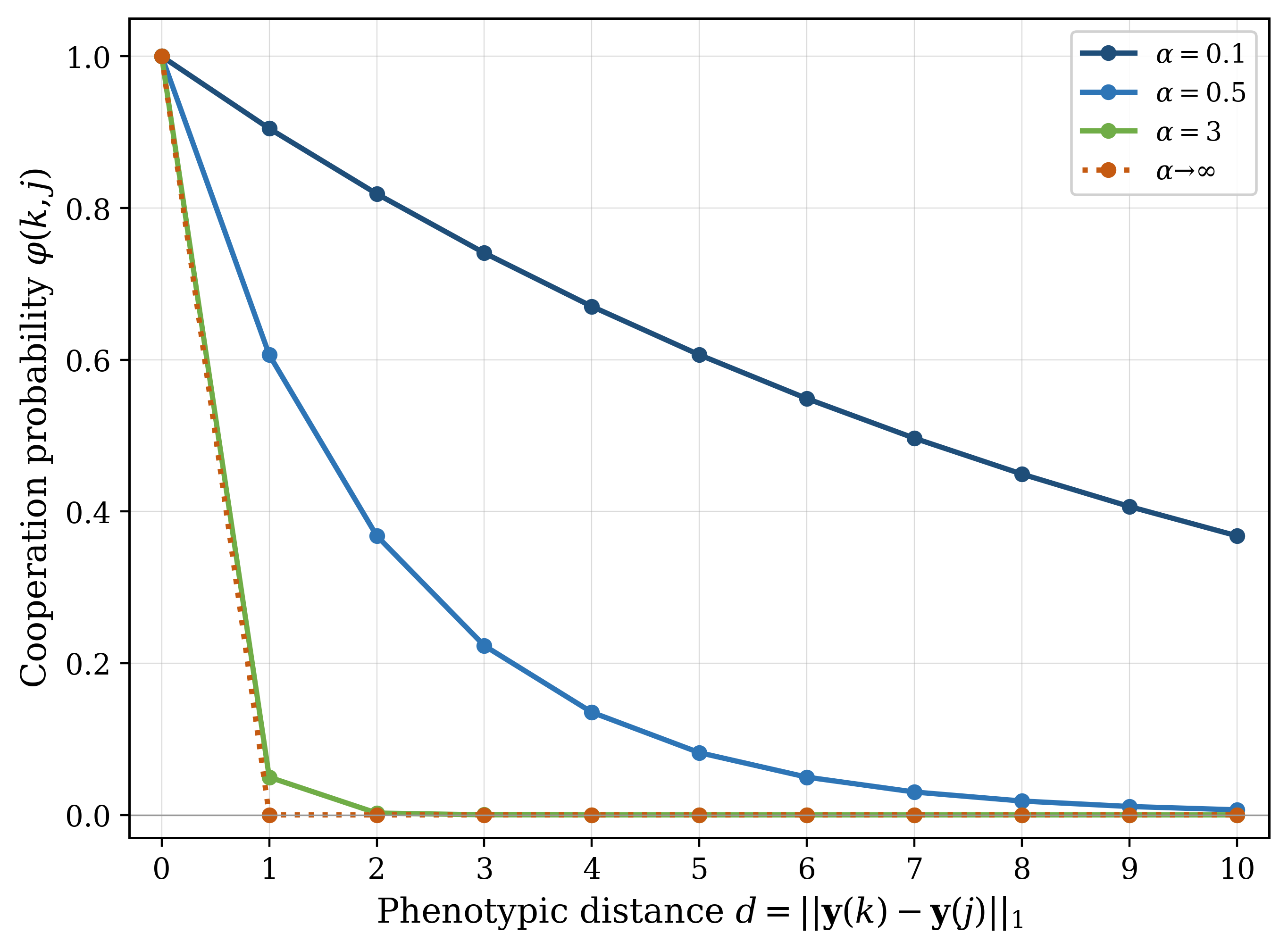}
  \caption{Curves of the cooperation probability $\varphi(k,j)$ as a function of the phenotypic distance $d(k,j)$ for $\alpha=0.1,\,0.5,\,3$ and the limiting case $\alpha\rightarrow\infty$. For small values of $\alpha$, the interaction kernel remains nearly flat, indicating low discrimination. As $\alpha$ increases, the probability mass concentrates increasingly on $d=0$, eventually converging to the binary exact-matching rule (Kronecker delta) in the limit $\alpha \to \infty$.}
  \label{fig:kernel_profiles}
\end{figure}

Suppose that interactions occur according to the simplified Prisoner's Dilemma, where a cooperator pays a cost $c>0$ to provide a benefit $b > c$ to its partner. A defector pays nothing and provides no benefit. After interactions, the expected payoff received by individual $k$ is given by 
\begin{equation}\label{eq20}
\begin{split}
a_k
&=\frac{1}{N-1}\sum_{j\neq k}\Big(\mathbf 1_{\{S(k)=S(j)=C\}}\Bigl[\varphi(k,j)\varphi(j,k)(b-c)
+(1-\varphi(k,j))\varphi(j,k)b-\varphi(k,j)(1-\varphi(j,k))c\Bigr]\\
&\quad\quad\quad\quad\quad\quad\quad\quad+\mathbf 1_{\{S(k)=D,S(j)=C\}}\varphi(k,j)b-\mathbf 1_{\{S(k)=C,S(j)=D\}}\varphi(k,j)c\Big) \\
&=\frac{1}{N-1}\sum_{j\neq k}\varphi(k,j)\bigl(b\,\mathbf 1_{\{S(j)=C\}}-c\,\mathbf 1_{\{S(k)=C\}}\bigr).
\end{split}
\end{equation}
In the last expression, the benefit to individual $k$ is discounted by the probability that its partner $j$ does not recognize their phenotypic dissimilarity, while the cost $c$ is discounted by the probability that individual $k$ does not recognize it. These payoffs determine an individual’s fertility, which is given by $f_k = 1 + \delta a_k$ for some $\delta > 0$ that measures selection intensity. We focus on weak selection, where $\delta\ll 1$.

At each generation, each individual will produce a very large number of offspring proportional to its reproductive fertility. Then, $N$ offspring will be sampled to mature and become the adults of the next generation. During reproduction, each offspring can be subject to independent mutations in the strategy and the phenotype. More precisely, an offspring will adopt the parent's strategy with probability $1-u$, or adopt a random strategy  among $\{C,D\}$ with the complementary probability $u$. In addition, the offspring will inherit the parent's phenotype with probability $1-v$. With the complementary probability $v$, a phenotypic mutation occurs. In this case, the offspring moves one step in  the positive or negative direction along a randomly chosen coordinate $m \in \{1, \dots, n\}$. In other words, the phenotype of the offspring can be
\[
\mathbf y_{\mathrm{off}}=
\begin{cases}
\mathbf y(k), & \text{with probability } 1-v,\\[4pt]
\mathbf y(k)+\mathbf e_m, & \text{with probability } \dfrac{v}{2n},\\[8pt]
\mathbf y(k)-\mathbf e_m, & \text{with probability } \dfrac{v}{2n},
\end{cases}
\]
All mutation events, both in strategy and phenotype, are independent for each offspring and across the two inherited traits. This ensures that the occurrence of a mutation in one offspring or one trait does not affect the mutation probability in another offspring or trait.

This phenotypic process generates a cluster of individuals that drifts through the lattice space. Although reproduction remains globally mixed, the distance-dependent interaction kernel ensures that phenotypic proximity leads to assortative interactions. Individuals who are close in phenotype space tend to engage in more cooperative acts than those who are far apart. Our analysis focuses on the stationary regime of this joint process to determine whether phenotypic complexity can promote or hinder the average abundance of cooperation.

\section{\label{sec2}Average abundance of cooperation}
In this section, we will determine the expression of the average abundance of cooperation in the stationary equilibrium.
Let $\mathbf{Y}_t = (\mathbf{y}_t(1), \dots, \mathbf{y}_t(N))$ be the phenotypic positions of all $N$ individuals at generation $t\geq0$. These phenotypes can shift indefinitely across the lattice, meaning that the process $(\mathbf{Y}_t)_t$ does not settle into a stationary state. However, the population tends to form a cluster that drifts collectively through the phenotype space. To describe this cluster, we will track the phenotypes of each individual $1, \dots, N-1$ relative to a reference individual $N$:
$$\mathbf{D}_t = (\mathbf{y}_t(1) - \mathbf{y}_t(N), \dots, \mathbf{y}_t(N-1) - \mathbf{y}_t(N)) \in (\mathbb{Z}^n)^{N-1}.$$
Following \citep{Kroumi2015}, the process $(\mathbf{D}_t)_t$ is irreducible and positive recurrent. When combined with strategy mutations, the joint process of relative phenotypes and strategies reaches a unique stationary distribution. We refer to this as the mutation–selection equilibrium.
In the remainder of the paper, all expectations $\mathbb E_{\delta}[.]$ are taken with respect to this equilibrium.

Now, consider a specific configuration $\mathbf{s}$ of the joint process in the equilibrium state. The frequency of cooperators is given by
$$X = \frac{1}{N} \sum_{k=1}^N \mathbf{1}_{\{S(k)=C\}}.$$ 
In the next generation, the probability $p_k(\mathbf{s})$ that an offspring descends from individual $k$ is determined by the relative fitness of $k$:
$$p_k(\mathbf{s}) = \frac{1 + \delta a_k}{\sum_{\ell=1}^N (1 + \delta a_\ell)}.$$
In this case,  the offspring is a cooperator with probability
\[(1-u)\mathbf 1_{\{S(k)=C\}}+\frac{u}{2},\]
since it inherits the parental strategy with probability $1-u$, or adopts a random strategy ($C$ or $D$ with equal probability) with probability $u$.
The expected change in cooperator frequency per generation, denoted as $\mathbb{E}_{\delta}[\Delta X \mid \mathbf{s}]$, can be written as
\begin{equation}\label{eq:DX-decomp-self}
\begin{split}
\mathbb E_{\delta}[\Delta X\mid \mathbf s]&=\sum_{k=1}^N p_k(\mathbf s)\Bigl((1-u)\mathbf 1_{\{S(k)=C\}}+\frac{u}{2}\Bigr)-X
\\
&=(1-u)\,\mathbb E_{\delta}\left[\Delta X_{\mathrm{sel}}\mid \mathbf s\right]+u\Bigl(\frac12-X\Bigr),\\
\end{split}
\end{equation}
where
\begin{equation}\label{eq1}
\mathbb E_{\delta}[\Delta X_{\mathrm{sel}}\mid \mathbf s]=\sum_{k=1}^N p_k(\mathbf s)\mathbf 1_{\{S(k)=C\}}-X
\end{equation}
denotes the expected change due to selection alone in the absence of strategy mutation.

At mutation–selection equilibrium, the expected change vanishes, $\mathbb{E}_{\delta}[\Delta X] = 0$. By taking the expectation of the decomposition in Eq. \eqref{eq:DX-decomp-self}, the average abundance of cooperation becomes
\begin{equation}\label{eq:avg-abundance-basic-self}
\mathbb{E}_{\delta}[X] = \frac{1}{2} + \frac{1 - u}{u} \mathbb{E}_{\delta}[\Delta X_{\mathrm{sel}}].
\end{equation}

In the neutral regime ($\delta=0$), we have $p_k(\mathbf s)=1/N$, for any individual $k$ and any configuration $\mathbf s$. This shows that
$$\mathbb E_0[\Delta X_{\mathrm{sel}}\mid \mathbf s]=0,$$
which shows that $\mathbb E_0[\Delta X_{\mathrm{sel}}]=0$. Then, the average abundance of cooperation is 
$$\mathbb E_0[X]=\frac12.$$
This gives the neutral baseline in the mutation–selection equilibrium

In the presence of payoff differences, we say that selection favors the average abundance of cooperation if
its average abundance exceeds its neutral value \(1/2\), which is equivalent to
\begin{equation}\label{eq:cooperation-favored-Delta}
\mathbb E_{\delta}[\Delta X_{\mathrm{sel}}]>0.
\end{equation}

To evaluate the average abundance of cooperation, note the expansion
\[
p_k(\mathbf s)
=
\frac{1+\delta a_k}{\sum_{\ell=1}^N (1+\delta a_\ell)}
=
\frac1N+\frac{\delta}{N}(a_k-\bar a)+O(\delta^2),
\]
where $\bar{a} = \frac{1}{N} \sum_{\ell=1}^N a_\ell$ is the average payoff in the population. Substituting this expansion into the expected change due to selection in Eq. \eqref{eq1}, we obtain
\begin{align}
\mathbb E_{\delta}[\Delta X_{\mathrm{sel}}\mid \mathbf s]
&=
\sum_{k=1}^N \mathbf 1_{\{S(k)=C\}}\left[\frac1N+\frac{\delta}{N}(a_k-\bar a)\right]-X+O(\delta^2)\notag\\
&=
\frac{\delta}{N}\sum_{k=1}^N \mathbf 1_{\{S(k)=C\}}(a_k-\bar a)+O(\delta^2)\notag\\
&=
\frac{\delta}{N}\sum_{k=1}^N \left(\mathbf 1_{\{S(k)=C\}}-X\right)a_k+O(\delta^2).
\label{eq:DXsel-first-order-self}
\end{align}
Inserting the payoff in \eqref{eq20} into \eqref{eq:DXsel-first-order-self} and regrouping the terms, the expected change due to selection takes the form
\begin{equation}\label{eq:DXsel-Q-self}
\mathbb E_{\delta}[\Delta X_{\mathrm{sel}}\mid \mathbf s]=\frac{\delta}{N(N-1)}\Bigl[b\,Q_{CC}^{(\alpha)}-\bigl(bX+c(1-X)\bigr)Q_C^{(\alpha)}\Bigr]+O(\delta^2),
\end{equation}
where the weighted sums are
\begin{subequations}\label{eq2}
\begin{align}
Q_C^{(\alpha)}&=\sum_{k\neq j}\varphi(k,j)\,\mathbf 1_{\{S(k)=C\}},\\
Q_{CC}^{(\alpha)}&=\sum_{k\neq j}\varphi(k,j)\,\mathbf 1_{\{S(k)=S(j)=C\}}.
\end{align}
\end{subequations}

Note that 
$$
\mathbb E_{\delta}[.]=\mathbb E_{0}[.]+\mathcal{O}(\delta).
$$
In addition, under neutrality, note the exchangeability of individuals and the symmetry between strategies $C$ and $D$. This yields
\begin{subequations}\label{eq2'}
\begin{align}
\mathbb{E}_0\left[Q_C^{(\alpha)}\right] &= \sum_{k\neq j}\mathbb{E}_0\left[\varphi(k,j)\,\mathbf 1_{\{S(k)=C\}}\right]=\frac{N(N-1)}{2} Z_n(\alpha),\\
\mathbb{E}_0\left[Q_{CC}^{(\alpha)}\right] &=\sum_{k\neq j}\mathbb{E}_0\left[\varphi(k,j)\,\mathbf 1_{\{S(k)=S(j)=C\}}\right] =\frac{N(N-1)}{2} G_n(\alpha),\\
\mathbb E_0\left[X Q_C^{(\alpha)}\right]&=\frac{1}{N}\sum_{\ell=1}^{N}\sum_{k\neq j}\varphi(k,j)\,\mathbf 1_{\{S(k)=S(\ell)=C\}}=\frac{N-1}{2}\Bigl[Z_n(\alpha)+G_n(\alpha)+(N-2)H_n(\alpha)\Bigr]
\end{align}
\end{subequations}
where
\begin{subequations}\label{eq3}
\begin{align}
Z_n(\alpha)&=\mathbb E_0\!\left[\varphi(k,j)\right], \label{eq:Zalpha-self}\\
G_n(\alpha)&=\mathbb E_0\!\left[\varphi(k,j)\,\mathbf 1_{\{S(k)=S(j)\}}\right], \label{eq:Galpha-self}\\
H_n(\alpha)&=\mathbb E_0\!\left[\varphi(k,j)\,\mathbf 1_{\{S(j)=S(\ell)\}}\right]. \label{eq:Halpha-self}
\end{align}
\end{subequations}
Here \(k\), \(j\), and \(\ell\) denote three distinct individuals chosen uniformly at random in the same generation. Taking the expectation in \eqref{eq:DXsel-Q-self} and using \eqref{eq2'}, we obtain
\begin{align}\label{eq:DeltaXsel-final-self}
\mathbb E_{\delta}[\Delta X_{\mathrm{sel}}]&=\frac{\delta}{N(N-1)}\left[b\,\mathbb E_0\left[Q_{CC}^{(\alpha)}\right]-b\,\mathbb E_0\left[XQ_C^{(\alpha)}\right]-c\left(\mathbb E_0\left[Q_C^{(\alpha)}\right]-\mathbb E_0\left[ XQ_C^{(\alpha)}\right]\right)\right]+O(\delta^2)\notag\\
&=\frac{\delta}{2N}\Big\{b\left[(N-1)G_n(\alpha)-Z_n(\alpha)-(N-2)H_n(\alpha)\right]\notag\\
&\quad\quad\quad\quad-c\left[(N-1)Z_n(\alpha)-G_n(\alpha)-(N-2)H_n(\alpha)\right]\Big\}+O(\delta^2).
\end{align}
Finally, substituting this into Eq.  \eqref{eq:avg-abundance-basic-self}, the average abundance of cooperation in the stationary equilibrium takes the form
\begin{equation}\label{eq:avg-abundance-expanded-self}
\begin{split}
\mathbb E_{\delta}[X]&=\frac12+\delta\,\frac{1-u}{2Nu}\Bigl\{b\left[(N-1)G_n(\alpha)-Z_n(\alpha)-(N-2)H_n(\alpha)\right]\\
&\quad\quad\quad\quad\quad\quad\quad\quad-c\left[(N-1)Z_n(\alpha)-G_n(\alpha)-(N-2)H_n(\alpha)\right]\Bigr\}+O(\delta^2).
\end{split}
\end{equation}

From this, we deduce that weak selection favors the abundance of cooperation if the following condition holds
\begin{equation}\label{eq:finiteN-condition-self}
b\Bigl[(N-1)G_n(\alpha)-Z_n(\alpha)-(N-2)H_n(\alpha)\Bigr] > c\Bigl[(N-1)Z_n(\alpha)-G_n(\alpha)-(N-2)H_n(\alpha)\Bigr].
\end{equation}
This condition can be restated  as $b/c > \beta_n(\alpha)$, where
\begin{equation}
\beta_n(\alpha)=\frac{(N-1)Z_n(\alpha)-G_n(\alpha)-(N-2)H_n(\alpha)}{(N-1)G_n(\alpha)-Z_n(\alpha)-(N-2)H_n(\alpha)}.
\end{equation}

To obtain an explicit expression for $\beta_n(\alpha)$ and make the analysis more tractable, we consider the limit $N \to \infty$ while the mutation probabilities $u$ and $v$ vanish ($u, v \to 0$). In this limit, we keep the mutation rates $\mu = Nu$ and $\nu = Nv$ constant. Then, the threshold becomes
\begin{equation}\label{eq10}
\beta_n(\alpha) = \frac{Z_n(\alpha) - H_n(\alpha)}{G_n(\alpha) - H_n(\alpha)}.
\end{equation}
To obtain explicit expressions for these identity measures, we will use the population's genealogy. Let $T = \tau(k,j)$ be the rescaled time back to the most recent common ancestor of individuals $k$ and $j$, taking $N$ generations as one unit of time in the limit of $N\rightarrow\infty$. Conditional on $T=\tau$, the strategy and phenotype mutations along the ancestral lines are independent. In addition, we have
\begin{subequations}
\begin{align}
 \mathbb E_0\!\left[\varphi(k,j)\mid T=\tau\right]&=e^{-2\nu\tau}[\Phi_\alpha\left(2\nu\tau/n\right)]^n,\\
 \mathbb E_0\!\left[\mathbf 1_{\{S(k)=S(j)\}}\mid T=\tau\right]&=\frac{1+e^{-2\mu\tau}}{2},
 \end{align}
\end{subequations}
where the phenotypic function $\Phi_\alpha$ is given by
\begin{equation}\label{eq0}
\Phi_\alpha(\sigma) = \frac{1 - e^{-2\alpha}}{\pi} \int_0^\pi \frac{e^{\sigma\cos\theta}}{1 - 2e^{-\alpha}\cos\theta + e^{-2\alpha}}  d\theta.
\end{equation}
Integrating over the distribution of coalescence times of three individuals $k,\, j,\,\ell$ selected randomly at the same generation, we obtain
\begin{subequations}\label{eq11}
\begin{align}
Z_n(\alpha)&=\frac{n}{2\nu}\mathcal{L}_n\!\left(\alpha,\frac{1}{2\nu}\right),\\
G_n(\alpha)&=\frac{n}{4\nu}\left[\mathcal{L}_n\!\left(\alpha,\frac{1}{2\nu}\right)+\mathcal{L}_n\!\left(\alpha,\frac{1+2\mu}{2\nu}\right)\right],\\
H_n(\alpha)&=\frac{n}{8\nu}\left[\frac{3+2\mu}{1+\mu}\,\mathcal{L}_n\!\left(\alpha,\frac{1}{2\nu}\right)+\mathcal{L}_n\!\left(\alpha,\frac{1+2\mu}{2\nu}\right)-\frac{\mu(3+2\mu)}{(1+\mu)(1+2\mu)}\,\mathcal{L}_n\!\left(\alpha,\frac{3+2\mu}{2\nu}\right)
\right].
\end{align}
\end{subequations}
Here $\mathcal{L}_n$ is a generalized Laplace transform given by
\begin{equation}\label{eq15}
\mathcal{L}_n(\alpha,x)=\int_0^\infty [\Phi_\alpha(\sigma)]^n e^{-n(1+x)\sigma}\,d\sigma.
\end{equation}
All the mathematical details regarding the expressions in \eqref{eq11} are provided in Appendix \ref{AppendixA}.
Substituting these expressions \eqref{eq11} into Eq. \eqref{eq10}, we obtain an explicit form of the threshold $\beta_n(\alpha)$ as
\begin{equation}\label{eq12}
\beta_n(\alpha)=\frac{(1+2\mu)^2\mathcal{L}_n(\alpha,\frac{1}{2\nu})+\mu(3+2\mu)\mathcal{L}_n(\alpha,\frac{3+2\mu}{2\nu})-(1+\mu)(1+2\mu)\mathcal{L}_n(\alpha,\frac{1+2\mu}{2\nu})}
{(1+\mu)(1+2\mu)\mathcal{L}_n(\alpha,\frac{1+2\mu}{2\nu})+\mu(3+2\mu)\mathcal{L}_n(\alpha,\frac{3+2\mu}{2\nu})-(1+2\mu)\mathcal{L}_n(\alpha,\frac{1}{2\nu})}.
\end{equation}

Note that the difference between the numerator and the denominator of $\beta_n(\alpha)$ in the last equation is equal to 
$$2(1+\mu)(1+2\mu)\left[\mathcal{L}_n\left(\alpha,\frac{1}{2\nu}\right)-\mathcal{L}_n\left(\alpha,\frac{1+2\mu}{2\nu}\right)\right]>0.$$ 
This follows from the fact that $x\rightarrow\mathcal{L}_n(\alpha,x)$ is decreasing. Consequently, we have $\beta_n(\alpha)>1$.


\section{\label{sec3}Effect of the discrimination parameter $\alpha$}
In this section, we examine how the sharpness of phenotypic discrimination, measured by parameter $\alpha$, affects the threshold $\beta_n(\alpha)$. 

Our first result is the following (the proof is given in Appendix \ref{appendixB}).
\begin{Result}\label{thm:alpha-monotonicity-section}
Fix a dimension $n \ge 1$, mutation rates $\mu > 0$, $\nu > 0$. An increase in the parameter $\alpha$ will decrease the threshold $\beta_n(\alpha)$ required for selection to favor the abundance of cooperation.
\end{Result}

This result shows that the sharpest discrimination (high values of $\alpha$) is the most favorable for the abundance of cooperation. At first glance, one might expect broader altruism to expand cooperation. Instead, the opposite holds: cooperation is costly, and helping phenotypically distant individuals is less aligned with phenotypic 
assortment. This yields a lower selective benefit than helping highly similar partners. As $\alpha$ increases, cooperation concentrates among similar phenotypes, and this concentration effect outweighs the 
benefit of having more partners.

Now, we will consider the two limiting regimes $\alpha \to 0^+$ and $\alpha \to \infty$. 
\begin{itemize}
\item \textbf{The Zero-Discrimination Limit ($\alpha \to 0^+$).} As $\alpha$ approaches $0$, we have $\varphi(k,j) \to 1$ for all $k \neq j$, which yields $Z_n(0) = 1$. This indicates that the behavior of any individual is independent of its partner's phenotype, and that no phenotypic assortment remains. 
It follows that selection does not favor the abundance of cooperation \cite{Antal2009a}, showing that $\beta_n(0^+)=\infty$. 

In the same limit, we have  $$G_n(0)\approx H_n(0) \approx \frac{1+\mu}{1+2\mu}.$$ 
This is a consequence of the fact that 
\begin{equation}\label{eq31}
\mathcal{L}_n(0,x) \sim \int_0^\infty e^{-nx\sigma} \, d\sigma = \frac{1}{nx}.
\end{equation}
This is exactly the probability that two randomly chosen individuals share the same behavioral strategy in a large, well-mixed population under neutrality (see \cite[Eq. (26a)]{KroumiLessard2021} or \cite[Eq. (18)]{KroumiLessard2022JMB}).
\item \textbf{The Perfect-Discrimination Limit ($\alpha \to \infty$).} As $\alpha$ becomes very large, we have $e^{-\alpha d} \to \mathbf{1}_{\{d=0\}}$, so that cooperation is extended only to individuals with identical phenotypes. In this limit, the Laplace transform $\mathcal{L}_n$ converges to the transform given in \citep[Eq. (23)]{Kroumi2015} for the exact-matching model. Thus, we have
$$\lim_{\alpha\rightarrow\infty}\beta_n(\alpha) = \beta_n,$$
where $\beta_n$ is the threshold derived in \citep{Kroumi2015}. This is the lowest value of the threshold $\beta_n(\alpha)$ for any $\alpha>0$.
By Result \ref{thm:alpha-monotonicity-section}, the exact-matching model is the limiting minimizer of the threshold $\beta_n(\alpha)$ within this exponential family. 
\end{itemize}
These limiting cases have a clear evolutionary interpretation. When $\alpha$ is small, individuals extend cooperative acts broadly regardless of phenotypic distance, which dilutes assortment and prevents phenotypic similarity from functioning. As $\alpha$ increases, benefits become concentrated among phenotypically similar individuals, and the resulting assortment is more efficient.
These conclusions are illustrated in Figure \ref{fig:monotone_alpha}.
\begin{figure}
    \centering
    \includegraphics[width=\textwidth,height=7cm,keepaspectratio]{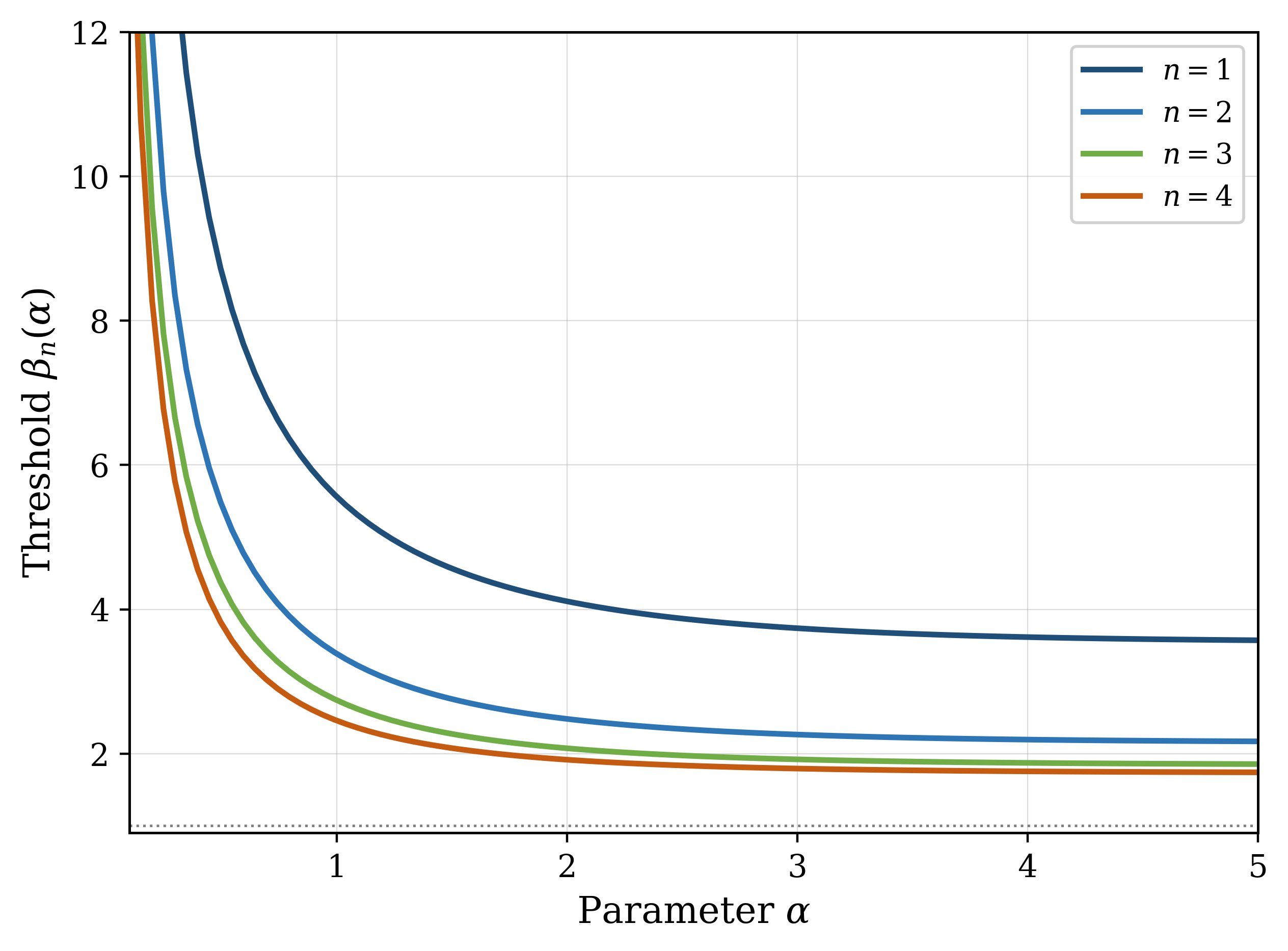}
    \caption{The threshold $\beta_n(\alpha)$ as a function of the parameter $\alpha$. The curves are shown for dimensions $n=1,2,3,4$, with mutation parameters fixed at $\mu=1$ and $\nu=3$. In each case, the threshold decreases as $\alpha$ increases.}
    \label{fig:monotone_alpha}
\end{figure}


\section{\label{sec4}Effect of phenotype-space dimension $n$}
In this section, we will study the effect of the dimensionality of the phenotype space on the threshold $\beta_n(\alpha)$. Increasing $n$ alters the geometry of the phenotype space. As a result, phenotypic mutations disperse across more independent directions. This makes the set of phenotypes in any given individual's neighborhood relatively sparse. 

This point is addressed in the following result (see Appendix \ref{AppendixC} for the proof).
\begin{Result}\label{thm:n-monotonicity-section}
For all $\alpha > 0$, $\mu > 0$, and $\nu > 0$, the critical benefit-to-cost ratio satisfies
$\beta_{n+1}(\alpha) < \beta_n(\alpha)$,
for every integer $n \ge 1$. A one-dimensional phenotype space is the least favorable configuration, and each additional phenotypic coordinate strictly lowers the threshold $\beta_n(\alpha)$.
\end{Result}

 In higher-dimensional phenotype spaces, a single mutation displaces an individual along a single coordinate, leaving all other coordinates unchanged. Lineages accumulating mutations in different directions diverge along independent axes and quickly move out of each other's interaction range. Since the kernel $\varphi$ decays with phenotypic distance, this spread weakens social ties between diverging lineages. At the same time, it strengthens the assortment by phenotype among individuals who remain close.

Consider the two scenarios defining the bounds of phenotypic complexity.
\begin{itemize}
\item \textbf{Case $n=1$}. This scenario represents a one-dimensional trait axis where mutations accumulate along a single line. This results in maximal overlap between lineages and the weakest phenotypic filtering of social interactions. 
\item \textbf{Case $n=\infty$}. In this limit, phenotypic mutations disperse over a large number of coordinate directions. Independently mutating lineages separate along different axes. As a result, local phenotypic neighborhoods become increasingly sparse. This geometric dilution strengthens phenotype-based assortment. The limiting threshold is
\begin{equation}\label{eq:beta-infty-alpha-section}
\beta_{\infty}(\alpha) = \lim_{n\to\infty} \beta_n(\alpha) = \frac{4c_\alpha^2\nu^2 + (8\mu+6)c_\alpha\nu + 4\mu^2 + 8\mu + 3}{4c_\alpha\nu(c_\alpha\nu + \mu + 1)}>1.
\end{equation}
where $c_{\alpha} = 1 - e^{-\alpha} \in (0,1)$ (see Appendix \ref{AppendixD-part3} for the proof).
This limit represents the minimum threshold achievable for any fixed set of parameters $\alpha,\,\mu,\,\nu$ for all finite dimensions $n\geq 1$.

Note that \eqref{eq:beta-infty-alpha-section} demonstrates that the large-$n$ limit remains sensitive to $\alpha$. Thus, increasing dimensionality does not make discrimination sharpness irrelevant. In addition, we have $\partial \beta_{\infty}(\alpha)/\partial\alpha<0$. So, an increase in $\alpha$ lowers the threshold and makes it more likely for selection to favor the abundance of cooperation. If discrimination is high ($\alpha \to \infty$), $c_\alpha \to 1$ and the threshold reaches its global minimum, given by
\begin{equation}
\beta_{\infty}(\infty) = \frac{4\nu^2 + (8\mu+6)\nu + 4\mu^2 + 8\mu + 3}{4\nu(\nu + \mu + 1)}.
\end{equation}
However, as the ability to discriminate vanishes ($\alpha \to 0$), $c_\alpha$ approaches $0$ and the threshold diverges to $\infty$. Dimension and discrimination therefore act on different aspects of the same geometry: increasing $n$ thins phenotypic neighborhoods, while increasing $\alpha$ concentrates cooperative acts within them.
\end{itemize}
These conclusions are illustrated in Figure \ref{fig:monotone_n}.

\begin{figure}
    \centering
     \includegraphics[width=\textwidth,height=7cm,keepaspectratio]{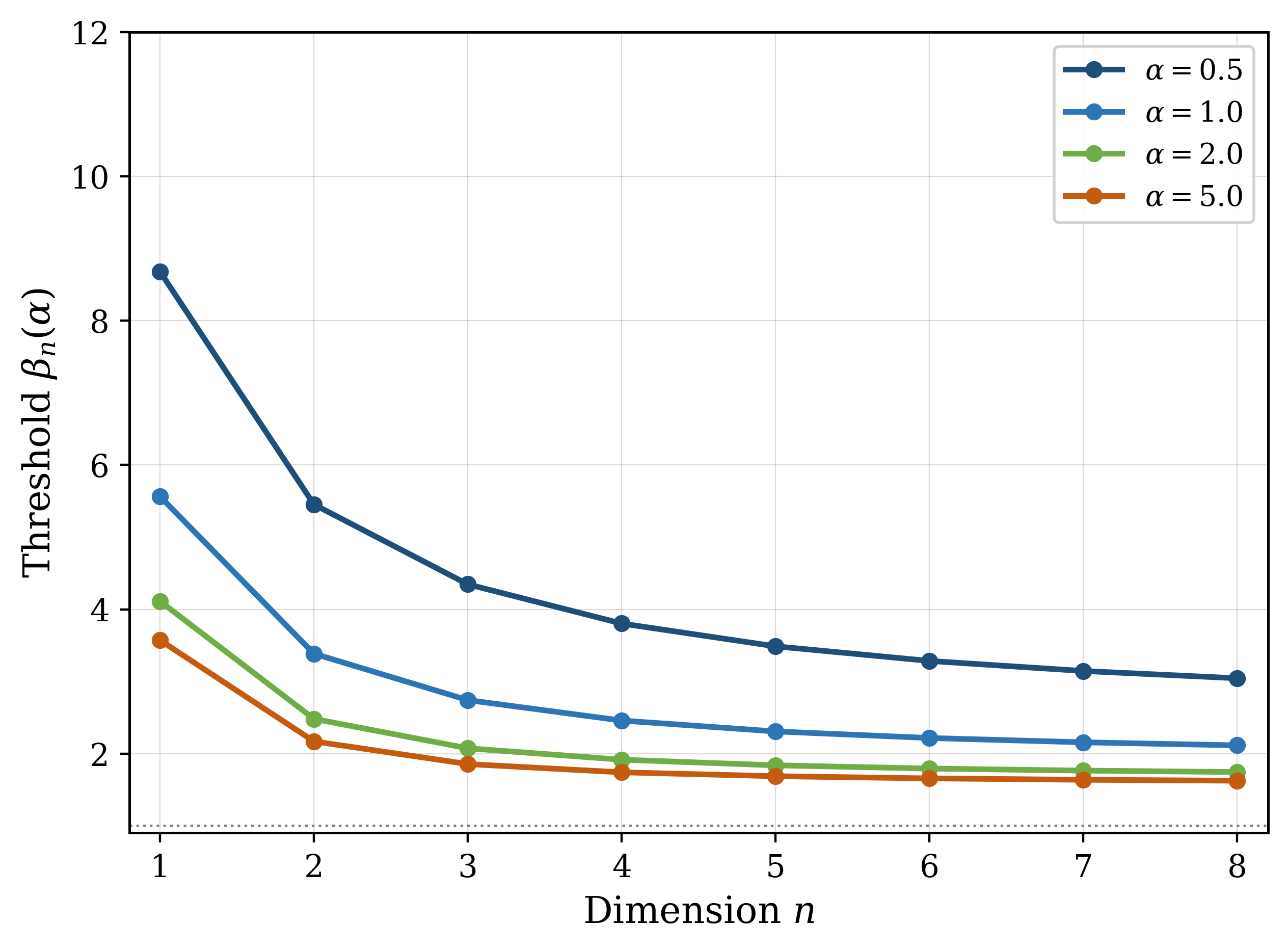}
    \caption{Threshold $\beta_n(\alpha)$ as a function of the dimension $n \in \{1, \dots, 8\}$ for various levels of parameter $\alpha \in \{0.5, 1, 2, 5\}$, with $\mu = 1$ and $\nu = 3$. Each curve is strictly decreasing, confirming Result \ref{thm:n-monotonicity-section}.}
    \label{fig:monotone_n}
\end{figure}

\section{\label{sec5}Asymptotic regimes in the mutation parameters}
Now, we analyze the threshold $\beta_n(\alpha)$ in four asymptotic regimes of the mutation parameters $\mu$ and $\nu$.
\subsection{Regime 1: $\nu\rightarrow 0$}
Suppose that the phenotype mutation rate is rare ($\nu \to 0$). The generalized Laplace transform has the asymptotic expansion
\begin{equation}\label{eq:Gtilde-low-nu-section}
\mathcal{L}_n\!\left(\alpha,\frac{a}{2\nu}\right)=\frac{2\nu}{na}-\frac{4c_\alpha\nu^2}{na^2}+O(\nu^3),
\end{equation}
for any $a > 0$ (see Appendix \ref{AppendixD-part1}). Evaluating this expansion at $a=1,\,1+2\mu,\,3+2\mu$ and substituting the results in Eq. \eqref{eq12} yields
\begin{align}
\beta_n(\alpha)\approx\frac{(1+2\mu)(3+2\mu)}{4(1+\mu)c_\alpha}\,\frac{1}{\nu}.\label{eq:lownu}
\end{align}
This shows that a small phenotype mutation rate will make it less likely for selection to favor the abundance of cooperation. When phenotypic mutations are rare, the population remains clustered within a small number of phenotype classes for extended periods, so cooperators and defectors repeatedly interact within the same groups. As a result, phenotypic assortment is too weak to protect cooperators from exploitation.

The following result summarizes all the findings.
\begin{Result}
When phenotype mutation is rare ($\nu \ll 1$), the threshold required for selection to favor the abundance of cooperation diverges as
\begin{equation*}
\beta_n(\alpha) \approx \frac{(1+2\mu)(3+2\mu)}{4(1+\mu)c_\alpha} \frac{1}{\nu}.
\end{equation*}
The factor $c_\alpha$ captures the additional selective cost imposed by weak discrimination.
\end{Result}

\subsection{Regime 2: $\nu\rightarrow\infty$}
Consider the case of high phenotype mutation rate \(\nu\to\infty\).
In this limit, we have the asymptotic behavior
\begin{equation}\label{eq14}
\lim_{\nu\to\infty}\beta_n(\alpha)=1,
\end{equation}
for $n\geq2$. See Appendix \ref{AppendixD-part2} for the mathematical details. This shows that, in this asymptotic regime, selection will favor the abundance of cooperation as long as $b>c$, and
that neither the strategy mutation rate $\mu$ nor the parameter $\alpha$ impacts this condition.

In the one-dimensional case ($n=1$), we obtain
\begin{equation}\label{eq:beta-high-nu-n1-section}
\lim_{\nu\to\infty}\beta_1(\alpha) = \frac{(1+2\mu)^2 + \mu\sqrt{3+2\mu} - (1+\mu)\sqrt{1+2\mu}}{\mu\sqrt{3+2\mu} + (1+\mu)\sqrt{1+2\mu} - (1+2\mu)}>1.
\end{equation}
Contrary to the case of $n\geq2$, the rate $\mu$ impacts this threshold. 
A simple differentiation confirms that this limiting threshold increases with the strategy mutation rate $\mu$, with
\begin{subequations}\label{eq:beta-high-nu-n1-limits}
\begin{align}
&\lim_{\mu\rightarrow\infty}\lim_{\nu\to\infty}\beta_1(\alpha) = \lim_{\mu\rightarrow\infty} \frac{ 4\mu^2 + O(\mu^{3/2})}{ 2\sqrt{2}\mu^{3/2} + O(\mu)}=\infty,\\
&\lim_{\mu\rightarrow0^+}\lim_{\nu\to\infty}\beta_1(\alpha) =\lim_{\mu\rightarrow0^+} \frac{ (2 + \sqrt{3})\mu + O(\mu^2)}{ \sqrt{3}\mu + O(\mu^2)}=1+\frac{2}{ \sqrt{3}}.
\end{align}
\end{subequations}
This shows that even with a high phenotype mutation rate, a high strategy mutation rate will make it impossible for selection to favor the abundance of cooperation.
The most favorable conditions arise when a high phenotype mutation is paired with a low strategy mutation. This will prevent defectors from disrupting cooperative clusters.

We consolidate these findings in the following result
\begin{Result}
In the regime of frequent phenotype mutation ($\nu \gg 1$), the threshold satisfies $\beta_n(\alpha) \approx 1$ for all $n \geq 2$, whereas for $n=1$ it converges 
to a finite limit strictly greater than $1$ given in \eqref{eq:beta-high-nu-n1-section}. Dimension $n=2$ is therefore the critical value 
above which rapid phenotypic turnover drives the threshold to its lower bound $1$.
\end{Result}

\subsection{Regime 3: $\mu\rightarrow0$}
Focus now on the case of rare strategy mutation $\mu \to 0^+$. 
Using Taylor expansion, we have
\begin{subequations}
\begin{align}
\mathcal{L}_n\!\left(\alpha, \frac{1+2\mu}{2\nu}\right) &= \mathcal{L}_n\!\left(\alpha, \frac{1}{2\nu}\right) + \frac{\mu}{\nu} \frac{\partial}{\partial x}\mathcal{L}_n\left(\alpha, \frac{1}{2\nu}\right) + O(\mu^2),\\
\mathcal{L}_n\!\left(\alpha, \frac{3+2\mu}{2\nu}\right) &= \mathcal{L}_n\!\left(\alpha, \frac{3}{2\nu}\right) + O(\mu).
\end{align}
\end{subequations}
where 
\begin{equation}
\frac{\partial}{\partial x}\mathcal{L}_n(\alpha, x) = -n \int_0^\infty \sigma \, [\Phi_\alpha(\sigma)]^n e^{-n(1+x)\sigma}\,d\sigma<0.
\end{equation}
Substituting these expansions into \eqref{eq12}, we obtain
\begin{equation}\label{eq:beta-low-mu-section-revised}
\beta_n(\alpha)=\frac{\mathcal{L}_n\!\left(\alpha,\frac{1}{2\nu}\right)-\frac{1}{\nu}\frac{\partial}{\partial x}\mathcal{L}_n\left(\alpha, \frac{1}{2\nu}\right)+3\,\mathcal{L}_n\!\left(\alpha,\frac{3}{2\nu}\right)}
{\mathcal{L}_n\!\left(\alpha,\frac{1}{2\nu}\right)+\frac{1}{\nu}\frac{\partial}{\partial x}\mathcal{L}_n\left(\alpha, \frac{1}{2\nu}\right)+3\,\mathcal{L}_n\!\left(\alpha,\frac{3}{2\nu}\right)},
\qquad\text{as}\quad\mu\rightarrow0^+.
\end{equation}
This threshold is finite, making it possible for selection to favor the abundance of cooperation. If strategy mutation is rare, cooperative clusters persist without frequent disruption by defectors, and the threshold is determined only by the phenotype interaction structure. Note that $\lim_{\mu\rightarrow0^+}\beta_n(\alpha)>1$ confirms that low strategy mutation reduces but does not remove completely the barrier to cooperation. A positive degree of phenotypic assortment remains necessary.

These findings are summarized in the following result.
\begin{Result}
When strategy mutation is rare ($\mu \ll 1$), the threshold $\beta_n(\alpha)$ converges to the finite limit given by \eqref{eq:beta-low-mu-section-revised}.
\end{Result}

\subsection{Regime 4: $\mu\rightarrow\infty$}
Suppose that strategy mutation is highly frequent ($\mu \to \infty$). In this limit, using similar expansion as \eqref{eq:Gtilde-low-nu-section}, we get
\begin{align*}
\mathcal{L}_n\left(\alpha,\frac{1+2\mu}{2\nu}\right)&=\frac{2\nu}{n(1+2\mu)}+O(\mu^{-2}),\\
\mathcal{L}_n\left(\alpha,\frac{3+2\mu}{2\nu}\right)&=\frac{2\nu}{n(3+2\mu)}+O(\mu^{-2}).
\end{align*}
Substituting the results into \eqref{eq12}, we obtain
\begin{equation}
\beta_n(\alpha)=\frac{2\mathcal{L}_n\!\left(\alpha,\frac{1}{2\nu}\right)}{\frac{2\nu}{n}-\mathcal{L}_n\!\left(\alpha,\frac{1}{2\nu}\right)}\,\mu+O(1).
\end{equation}
Since
\[
\mathcal{L}_n\!\left(\alpha,\frac{1}{2\nu}\right)<\int_0^\infty e^{-n\sigma/(2\nu)}\,d\sigma=\frac{2\nu}{n},
\]
the coefficient of $\mu$ is strictly positive, and accordingly
\begin{equation}\label{eq:beta-high-mu-section}
\lim_{\mu\rightarrow\infty}\beta_n(\alpha)=\infty.
\end{equation}
This shows that high strategy mutation continuously reintroduces defectors into phenotype classes that would otherwise remain dominated by cooperators. This breaks down the correlation between phenotype and behavioral type that the assortment mechanism depends on. 

These observations are summarized in the following.
\begin{Result} 
In the case of high strategy mutation rates, the threshold $\beta_n(\alpha)$ diverges linearly with $\mu$, making it impossible for selection to favor the abundance of cooperation.
\end{Result}

\subsection{Summary}
These four regimes together describe how the threshold $\beta_n(\alpha)$ depends on the mutation parameters. If the phenotypic mutation is rare $(\nu \to 0^+)$, the threshold diverges like $1/\nu$, so cooperation cannot be favored. However, when phenotypic mutation is frequent $(\nu \to \infty)$, the threshold approaches $1$ for all $n \ge 2$, so selection favors the abundance of cooperation whenever $b>c$, independently of $\alpha$ and $\mu$. The strategy mutation rate has the opposite effect. If $\mu$ is small, the threshold remains finite and depends on the phenotype-based interaction structure, so cooperation can be favored if $b/c$ is sufficiently large. When $\mu$ is large, the threshold grows linearly with $\mu$. In this regime, cooperation becomes impossible to favor. 
These behaviors are shown in Figure \ref{fig:phase_nu_mu}.
\begin{figure}[H]
  \centering
 \includegraphics[width=\textwidth,height=8cm,keepaspectratio]{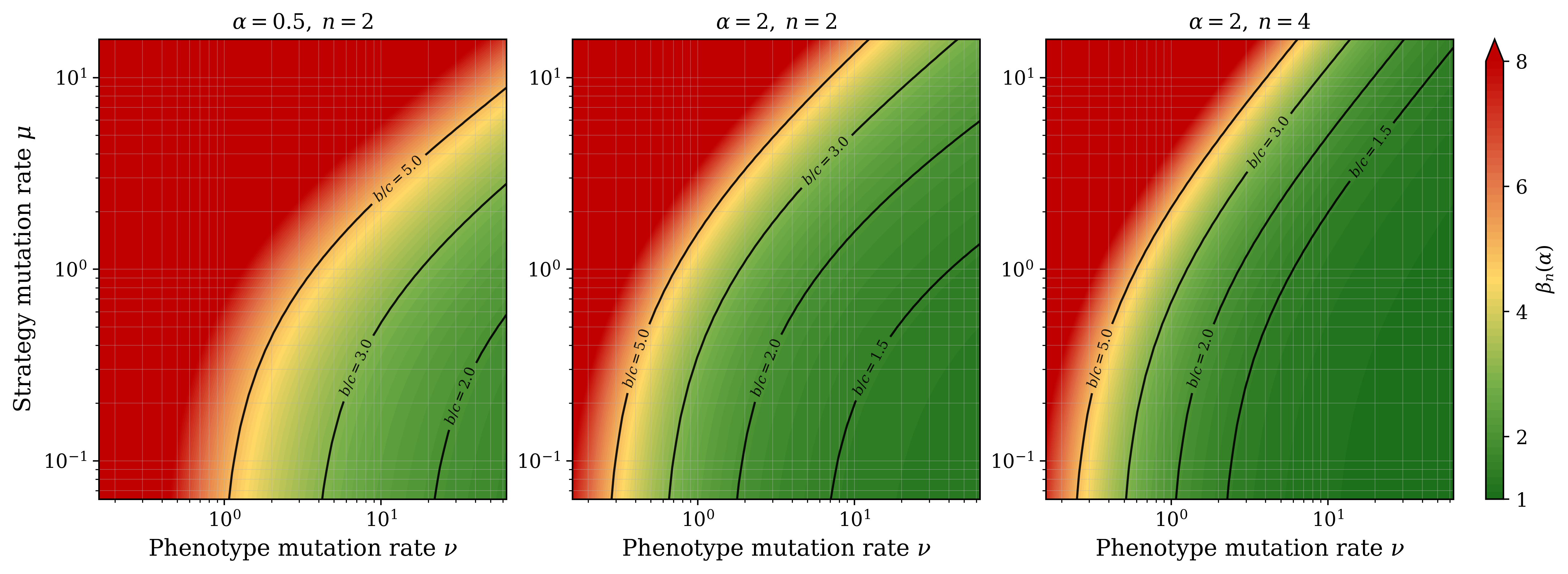}
  \caption{Phase diagrams of the cooperation threshold $\beta_n(\alpha)$ in the $(\nu,\mu)$ plane (logarithmic scales). Across all panels, low phenotypic mutation $\nu$ is unfavorable, with the threshold diverging as $1/\nu$, whereas a large $\nu$ promotes cooperation and, for $n\ge 2$, drives the threshold toward the lower bound $1$. However, a low strategy mutation rate $\mu$ yields a finite threshold, while a large $\mu$ suppresses cooperation by causing the threshold to grow linearly. \emph{Left} ($\alpha=0.5, n=2$): soft discrimination restricts the cooperative region. \emph{Center} ($\alpha=2, n=2$): sharper discrimination expands it. \emph{Right} ($\alpha=2, n=4$): higher phenotypic dimensionality expands it further.}
  \label{fig:phase_nu_mu}
\end{figure}

\section{\label{sec6}General symmetric two-player games}
The analytical framework developed above for the simplified Prisoner's dilemma can be extended to a general symmetric two-player game 
\[
\begin{array}{c|cc}
 & C & D\\
\hline
C & R & S\\
D & T & P
\end{array}.
\]

After interactions, the expected payoff obtained by individual \(k\) in interaction with individual \(j\) becomes
\[
R\,q_{kj}q_{jk}+S\,q_{kj}(1-q_{jk})+T\,(1-q_{kj})q_{jk}+P\,(1-q_{kj})(1-q_{jk}),
\]
where \(q_{\ell_1\ell_2}=\mathbf 1_{\{S(\ell_1)=C\}}\varphi(\ell_1,\ell_2)\) denotes the probability that individual $\ell_1$ cooperates with individual $\ell_2$. Expanding it and using \((\mathbf 1_{\{S(k)=C\}})^2=\mathbf 1_{\{S(k)=C\}}\) and \(\varphi(j,k)=\varphi(k,j)\), we obtain
\begin{align}
a_k&=\frac{1}{N-1}\sum_{j\neq k}\Bigl[R\,\mathbf 1_{\{S(k)=S(j)=C\}}\varphi(k,j)^2+S\,\mathbf 1_{\{S(k)=C\}}\varphi(k,j)(1-\mathbf 1_{\{S(j)=C\}}\varphi(k,j))\notag\\
&\quad+T\,(1-\mathbf 1_{\{S(k)=C\}}\varphi(k,j))\mathbf 1_{\{S(j)=C\}}\varphi(k,j)+P\,(1-\mathbf 1_{\{S(k)=C\}}\varphi(k,j))(1-\mathbf 1_{\{S(j)=C\}}\varphi(k,j))\Bigr]\notag\\
&=\frac{1}{N-1}\sum_{j\neq k}\Bigl[P+(S-P)\mathbf 1_{\{S(k)=C\}}\varphi(k,j)+(T-P)\mathbf 1_{\{S(j)=C\}}\varphi(k,j)\notag\\
&\quad\quad\quad\quad\quad\quad\quad\quad\quad\quad\quad\quad\quad\quad\quad+(R-S-T+P)\mathbf 1_{\{S(k)=S(j)=C\}}\varphi(k,j)^2\Bigr].
\label{eq:general-payoff-prob}
\end{align}
The simplified Prisoner's Dilemma is recovered by taking \(R=b-c\), \(S=-c\), \(T=b\), and \(P=0\), in which case \(R-S-T+P=0\) and the quadratic term disappears. However, for a general game, the nonlinear term proportional to \(\varphi(k,j)^2\) is essential and reflects the fact that both players must independently cooperate in order for the mutual-cooperation payoff \(R\) to be realized.

Under weak selection, the conditional change in the frequency of cooperation due to selection becomes
\begin{equation}\label{eq:sumcka-general-prob}
\begin{split}
&\mathbb E_{\delta}[\Delta X_{\mathrm{sel}}\mid \mathbf s]\\
&=\frac{\delta}{N}\sum_{k=1}^N (\mathbf 1_{\{S(k)=C\}}- X)\,a_k+O(\delta^2)\\
&=\frac{\delta}{N}\Bigg[\frac{S-P}{N-1}\sum_{k\neq j}\varphi(k,j)\mathbf 1_{\{S(k)=C\}}(\mathbf 1_{\{S(k)=C\}}-X)+\frac{T-P}{N-1}\sum_{k\neq j}\varphi(k,j)\mathbf 1_{\{S(j)=C\}}(\mathbf 1_{\{S(k)=C\}}-X)\\
&\qquad+\frac{R-S-T+P}{N-1}\sum_{k\neq j}\varphi(k,j)^2\mathbf 1_{\{S(k)=S(j)=C\}}(\mathbf 1_{\{S(k)=C\}}- X)\Bigg]\\
&=\frac{\delta}{N(N-1)}\Bigl[(S-P)(1- X)Q_C^{(\alpha)}+(T-P)\bigl(Q_{CC}^{(\alpha)}- X\,Q_C^{(\alpha)}\bigr)+(R-S-T+P)(1-X)Q_{CC}^{(2\alpha)}
\Bigr],
\end{split}
\end{equation}
where $Q_C^{(\alpha)}$ and $Q_{CC}^{(\alpha)}$ are given in Eq. \eqref{eq2} and
\[
Q_{CC}^{(2\alpha)}=\sum_{k\neq j}\varphi(k,j)^2\mathbf 1_{\{S(k)=S(j)=C\}}.
\]
Using Eq. \eqref{eq2'}, we have
\begin{subequations}\label{eq30}
\begin{align}
\mathbb E_0[(1- X)Q_C^{(\alpha)}]&=\frac{N-1}{2}\Bigl[(N-1)Z_n(\alpha)-G_n(\alpha)-(N-2)H_n(\alpha)\Bigr],\label{eq:EC-general-prob}\\
\mathbb E_0[Q_{CC}^{(\alpha)}- X\,Q_C^{(\alpha)}]&=\frac{N-1}{2}\Bigl[(N-1)G_n(\alpha)-Z_n(\alpha)-(N-2)H_n(\alpha)\Bigr].\label{eq:ECC-general-prob}
\end{align}
\end{subequations}
The term $XQ_{CC}^{(2\alpha)}$ has no counterpart in the Prisoner's Dilemma case and requires introducing a three-point identity
\[
\eta_n(\gamma)=\mathbb E_0\!\left[e^{-\gamma\|\mathbf y(k)-\mathbf y(j)\|_1}\,\mathbf 1_{\{S(k)=S(j)=S(\ell)\}}\right],
\]
for $\gamma>0$. Using \(\varphi(k,j)^2=e^{-2\alpha\|\mathbf y(k)-\mathbf y(j)\|_1}\), we obtain
\begin{equation}\label{eq:E1mXQCC2-general-prob}
\mathbb E_0\left[(1- X)Q_{CC}^{(2\alpha)}\right]=\frac{(N-1)(N-2)}{2}\Bigl[G_n(2\alpha)-\eta_n(2\alpha)\Bigr].
\end{equation}
Taking the expectation of \eqref{eq:sumcka-general-prob} and using \eqref{eq30}-\eqref{eq:E1mXQCC2-general-prob}, we have
\begin{equation}\label{eq:general-DeltaXsel-prob}
\begin{split}
\mathbb E_{\delta}[\Delta X_{\mathrm{sel}}]&=\frac{\delta}{2N}\Bigl\{(S-P)\Bigl[(N-1)Z_n(\alpha)-G_n(\alpha)-(N-2)H_n(\alpha)\Bigr]\\
&\qquad\qquad+(T-P)\Bigl[(N-1)G_n(\alpha)-Z_n(\alpha)-(N-2)H_n(\alpha)\Bigr]\\
&\qquad\qquad+(R-S-T+P)(N-2)\Bigl[G_n(2\alpha)-\eta_n(2\alpha)\Bigr]\Bigr\}+O(\delta^2).
\end{split}
\end{equation}
Therefore, the average abundance of cooperation in the stationary equilibrium will take the form
\begin{equation}\label{eq:general-abundance-prob}
\begin{split}
\mathbb E_{\delta}[X]&=\frac12+\delta\,\frac{1-u}{2Nu}\Bigl\{(S-P)\Bigl[(N-1)Z_n(\alpha)-G_n(\alpha)-(N-2)H_n(\alpha)\Bigr]\\
&\qquad\qquad\qquad\qquad+(T-P)\Bigl[(N-1)G_n(\alpha)-Z_n(\alpha)-(N-2)H_n(\alpha)\Bigr]\\
&\qquad\qquad\qquad\qquad+(R-S-T+P)(N-2)\Bigl[G_n(2\alpha)-\eta_n(2\alpha)\Bigr]\Bigr\}+O(\delta^2).
\end{split}
\end{equation}

In the large-population limit \(N\to\infty\), \(u\to0\), \(v\to0\), with \(\mu=Nu\) and \(\nu=Nv\) kept fixed, we have
\begin{subequations}
\begin{align}
Z_n(\gamma)&=\frac{n}{2\nu}\,\mathcal L_n\!\left(\gamma,\frac{1}{2\nu}\right),\\
G_n(\gamma)&=\frac{n}{4\nu}\left[\mathcal L_n\!\left(\gamma,\frac{1}{2\nu}\right)+\mathcal L_n\!\left(\gamma,\frac{1+2\mu}{2\nu}\right)\right],\\
H_n(\gamma)&=\frac{n}{8\nu}\left[\frac{3+2\mu}{1+\mu}\,\mathcal L_n\!\left(\gamma,\frac{1}{2\nu}\right)+\mathcal L_n\!\left(\gamma,\frac{1+2\mu}{2\nu}\right)-\frac{\mu(3+2\mu)}{(1+\mu)(1+2\mu)}\,\mathcal L_n\!\left(\gamma,\frac{3+2\mu}{2\nu}\right)\right],\\
\eta_n(\gamma)&=\frac{n}{8\nu}\left[\frac{2+\mu}{1+\mu}\,\mathcal L_n\!\left(\gamma,\frac{1}{2\nu}\right)+2\mathcal L_n\!\left(\gamma,\frac{1+2\mu}{2\nu}\right)-\frac{\mu(3+2\mu)}{(1+\mu)(1+2\mu)}\,\mathcal L_n\!\left(\gamma,\frac{3+2\mu}{2\nu}\right)\right].
\end{align}
\end{subequations}
For details regarding the calculation of the identity $\eta_n(\alpha)$, see Appendix \ref{AppendixA}.
We find that weak selection favors the abundance of cooperation if the following condition holds
\begin{equation}\label{eq:general-largeN-condition-prob}
(S-P)\bigl[Z_n(\alpha)-H_n(\alpha)\bigr]+(T-P)\bigl[G_n(\alpha)-H_n(\alpha)\bigr]+(R-S-T+P)\bigl[G_n(2\alpha)-\eta_n(2\alpha)\bigr]>0.
\end{equation}


\section{\label{sec8}Discussion}
In this paper, we have considered a finite population of size \(N\) in which cooperation operates through a graded recognition rule. Each individual is characterized by an \(n\)-dimensional phenotype and a strategy, cooperation or defection. A cooperator helps another individual in the population with probability \(e^{-\alpha d}\), where \(d\) is the distance between their phenotypes and \(\alpha > 0\) measures discrimination sharpness. In doing so, it pays a cost \(c>0\) to provide a benefit \(b>c\) to its partner. The population evolves according to a Wright--Fisher model with mutation in both traits: strategy with probability \(u\) and phenotype with probability \(v\). We have derived an explicit threshold \(\beta_n(\alpha)\) that the benefit-to-cost ratio must exceed for weak selection to favor the abundance of cooperation in the stationary equilibrium. For \(N\) large enough, this threshold can be expressed in terms of a generalized Laplace transform.

Our first main result concerns the effect of the discrimination parameter \(\alpha\). When \(\alpha \to 0\), a cooperator helps any other individual with probability approaching \(1\), so no phenotypic assortment is generated. In this case, \(\beta_n(0^+) = \infty\), recovering the known result that a well-mixed population never favors cooperation~\citep{LehmannRoussett2010}. As \(\alpha\) increases from zero, cooperative acts become increasingly concentrated on phenotypically similar partners, and \(\beta_n(\alpha)\) decreases. The selective advantage of cooperation rises because assortment becomes more efficient, even though the set of likely recipients shrinks. This means that selection can favor the abundance of cooperation, even when recognition is not perfect. The threshold attains its minimum in the limit \(\alpha \to \infty\), which corresponds to the exact-matching model of~\citep{Antal2009c,Kroumi2015}. Thus, sharper discrimination is the most favorable scenario for the abundance of cooperation. This agrees with the idea that recognition is most effective when cues are sufficiently reliable~\citep{Grafen1990,RoussetRoze2007}.

The effect of the phenotype-space dimension \(n\) is geometric in origin. A single mutation displaces a lineage along one coordinate axis while leaving all other components unchanged. Two lineages that accumulate mutations along different axes diverge and move out of one another's interaction range, whereas those that remain close continue to interact. Adding a dimension therefore reduces the overlap between unrelated lineages and sharpens the phenotypic filtering of social interactions. In the exact-matching model~\citep{Kroumi2015}, the authors observed numerically that the threshold decreases with \(n\) and suggested that this reflects a general pattern. We have proved that this is indeed true for all \(\alpha>0\). The decrease from \(n=1\) to \(n=2\) is especially pronounced; beyond \(n=2\), the reduction is smaller but remains strictly positive for every integer \(n\), as shown in Figure~4. This suggests that recognition systems combining multiple independent cues may support more cooperative interactions and thereby strengthen assortment among socially compatible individuals. Richer phenotype spaces can therefore sharpen partner discrimination beyond what any single cue can achieve~\citep{Gruenheit2017}.

The mutation rates, the phenotype mutation rate \(\nu=\lim_{N\rightarrow\infty}Nv\) and the strategy mutation rate \(\mu=\lim_{N\rightarrow\infty}Nu\), act in opposite directions. When \(\nu\) is small, lineages remain clustered within a few phenotypic classes for long periods. Then, cooperators and defectors occupy the same phenotypic neighborhoods, and the threshold diverges as
\[
\beta_n(\alpha) \sim \frac{(1+2\mu)(3+2\mu)}{4(1+\mu)(1-e^{-\alpha})}\,\frac{1}{\nu},
\,
\]
as \(\nu \to 0^+\). Therefore, the population behaves much like a well-mixed one. By contrast, when \(\nu\) is large and \(n \geq 2\), lineages spread across the lattice and interactions become concentrated among those that remain phenotypically close. In this regime, \(\beta_n(\alpha) \to 1\), so \(b>c\) becomes sufficient regardless of \(\alpha\) and \(\mu\). A high phenotype mutation rate keeps groups small and homogeneous, so the strategy with the higher payoff against itself, namely cooperation, is favored. Tag-based models reach a similar conclusion through a different route, showing that rapid turnover of recognition markers allows cooperators to escape exploitation and reestablish under new tags~\citep{Jansen2006,TraulsenNowak2007}. The one-dimensional case is exceptional. Even as \(\nu \to \infty\), the threshold does not fall to \(1\) but instead converges to 
\begin{equation*}
\lim_{\nu\to\infty}\beta_1(\alpha) = \frac{(1+2\mu)^2 + \mu\sqrt{3+2\mu} - (1+\mu)\sqrt{1+2\mu}}{\mu\sqrt{3+2\mu} + (1+\mu)\sqrt{1+2\mu} - (1+2\mu)}>1.
\end{equation*}
It grows with \(\mu\) and diverges as \(\mu \to \infty\). This means that a one-dimensional lattice does not separate diverging lineages as effectively as higher-dimensional spaces, so phenotypic dispersion alone is not sufficient to favor the abundance of cooperation.

The strategy mutation rate \(\mu\) has the opposite effect. Large values of \(\mu\) continuously reintroduce defectors into phenotypic clusters composed entirely of cooperators, while also reintroducing cooperators into clusters of defectors.  This weakens the association between phenotype and strategy on which the assortment mechanism depends. Because defectors can invade cooperative groups more effectively than cooperators can invade groups of defectors, selection tends to favor the strategy with the higher payoff against the other, namely defection. In this case, the threshold grows linearly with \(\mu\) and diverges, making it impossible for selection to favor cooperation regardless of the value of \(b/c\). This effect becomes stronger as \(\mu\) increases. When \(\mu\) is small, cooperative clusters persist long enough for phenotypic assortment to protect them. Recognition succeeds not merely because individuals carry phenotypic cues, but because those cues remain predictive of behavior across generations~\citep{Axelrod2004,Jansen2006,GardnerWest2010}. The most favorable regime for cooperation combines a high phenotype mutation rate with a low strategy mutation rate.

Finally, note that the exact-matching model is a limiting case that leads to the lowest possible threshold. Weakening the exactness of recognition raises \(\beta_n(\alpha)\), but preserves the qualitative structure of the model. The threshold still decreases with \(\alpha\), decreases with \(n\), diverges when \(\nu\) is small, and diverges when \(\mu\) is large. What changes is quantitative rather than qualitative. Imperfect recognition requires a larger benefit-to-cost ratio for cooperation to be favored, but it does not change how the threshold depends on the parameters. The conclusions of exact-matching models therefore do not depend solely on the assumption of binary recognition. Instead, they reflect the geometry of phenotype space and the way lineages separate within it.

\appendix

\section{\label{AppendixA}Calculation of the identity measures}
In this appendix, we will analyze the model in the large-population limit ($N \to \infty$) with vanishing mutation probabilities ($u, v \to 0$), while keeping the mutation rates $\mu = Nu$ and $\nu = Nv$ constant. In this case, the ancestry of individuals is described by the Kingman coalescent, where the rescaled coalescence time $T$ for two lineages has the exponential density $f_1(\tau) = e^{-\tau}$ for $\tau > 0$. Conditional on \(T=\tau\), strategy mutations and phenotype mutations on the ancestral lines are independent. 

The number of strategy mutations on the ancestral branches of individuals $j$ and $k$ up to their most recent common ancestor is a Poisson random variable with  mean $2\mu\tau$. Then, they share the same strategy if there is no mutation on their ancestral branches which occurs with probability $e^{-2\mu\tau}$ or if the last mutation moving from individual $j$ to their MRCA and then to individual $k$ will lead to the same type as individual $j$ which occurs with probability $(1-e^{-2\mu\tau})\times\frac{1}{2}$. Therefore, we have 
\begin{equation}\label{eq:pair-strategy-self}
\mathbb P_0\bigl(S(k)=S(j)\mid T=\tau\bigr)=\frac{1+e^{-2\mu\tau}}{2}.
\end{equation}

Now, we are interested in
\begin{equation}
    \psi_n(\alpha,\tau):=\mathbb E_0\!\left[e^{-\alpha\|\mathbf y(k)-\mathbf y(j)\|_1}\mid T=\tau\right].
\end{equation}
Similarly, conditional on $T=\tau$, the number of phenotype mutations in one ancestral line is a Poisson random variable with mean $\nu\tau/n$ on each coordinate \(i\in\{1,\dots,n\}\). Define the coordinate difference
\[
d_i(k,j):=y_i(k)-y_i(j).
\]
Along one ancestral line, mutations in the \(i\)-th coordinate occur at a total rate \(\nu/n\), split equally between \(+1\) and \(-1\) jumps. Since two lineages are involved, the total number of \(+1\) jumps contributing to the difference \(d_i(k,j)=y_i(k)-y_i(j)\) is Poisson with parameter \(\nu\tau/n\), and the same holds for the total number of \(-1\) jumps. Hence, $d_i(k,j)$
has a centered Skellam distribution, where
\[
\mathbb P(d_i(k,j)=m\mid T=\tau)=e^{-\sigma}I_{|m|}(\sigma).
\]
Here, $I_m$ is the modified Bessel function of the first kind given by
$$
I_m(x)=\sum_{\ell=0}^{\infty}\frac{1}{\ell!(2\ell+|m|)!}\left(\frac{x}{2}\right)^{2\ell+|m|}
$$
and $\sigma=2\nu\tau/n$.
Therefore, we obtain
\begin{equation}\label{eq:one-coordinate-kernel-self}
\mathbb E_0\!\left[e^{-\alpha |d_i(k,j)|}\mid T=\tau\right]
=\sum_{m=-\infty}^{\infty}e^{-\alpha|m|}e^{-\sigma}I_{|m|}(\sigma)=
e^{-\sigma}\Phi_\alpha(\sigma),
\end{equation}
where
\begin{equation}\label{eq:Phi-alpha-def-self}
\Phi_\alpha(\sigma)=\sum_{m=-\infty}^{\infty}e^{-\alpha|m|}I_{|m|}(\sigma)
=I_0(\sigma)+2\sum_{m=1}^{\infty}e^{-\alpha m}I_m(\sigma).
\end{equation}
Since the coordinate differences \(d_1(k,j),\dots,d_n(k,j)\) are independent and
\(
\|\mathbf y(k)-\mathbf y(j)\|_1=\sum_{i=1}^n |d_i(k,j)|,
\)
we obtain
\begin{equation}\label{eq:psi-alpha-self}
\psi_n(\alpha,\tau)=\prod_{i=1}^n \mathbb E_0\!\left[e^{-\alpha|d_i(k,j)|}\mid T=\tau\right]
=\Bigl[e^{-\sigma}\Phi_\alpha(\sigma)\Bigr]^n.
\end{equation}

The function \(\Phi_\alpha\) also admits a useful integral representation. Indeed, using
\[
I_m(\sigma)=\frac1\pi\int_0^\pi e^{\sigma\cos\theta}\cos(m\theta)\,d\theta
\]
and the Poisson kernel identity
\[
1+2\sum_{m=1}^\infty r^m\cos(m\theta)=\frac{1-r^2}{1-2r\cos\theta+r^2},
\]
with \(r=e^{-\alpha}\), we get
\begin{equation}\label{eq:Phi-alpha-integral-self}
\begin{split}
\Phi_\alpha(\sigma)&=\frac1\pi\int_0^\pi e^{\sigma\cos\theta}\left(1+2\sum_{m=1}^\infty e^{-\alpha m}\cos(m\theta)\right)\,d\theta\\
&=\frac{1-e^{-2\alpha}}{\pi}\int_0^\pi\frac{e^{\sigma\cos\theta}}{1-2e^{-\alpha}\cos\theta+e^{-2\alpha}}\,d\theta.
\end{split}
\end{equation}
Using the tower property in Eqs. \eqref{eq:pair-strategy-self} and \eqref{eq:psi-alpha-self} yields
\begin{subequations}
\begin{align}
Z_n(\alpha)&=\int_0^\infty \psi_n(\alpha,\tau)e^{-\tau}\,d\tau =\frac{n}{2\nu}\,\mathcal{L}_n\!\left(\alpha,\frac{1}{2\nu}\right),\label{eq:Zalpha-Gtilde-self}\\
G_n(\alpha)&=\int_0^\infty\psi_n(\alpha,\tau)\,\frac{1+e^{-2\mu\tau}}{2}\,e^{-\tau}\,d\tau=\frac{n}{4\nu}\left[\mathcal{L}_n\!\left(\alpha,\frac{1}{2\nu}\right)+\mathcal{L}_n\!\left(\alpha,\frac{1+2\mu}{2\nu}\right)\right],\label{eq:Galpha-Gtilde-self}
\end{align}
\end{subequations}
where $\mathcal{L}_n$ is a generalized Laplace transform given by
\begin{equation}\label{eq:Gtilde-def-self}
\mathcal{L}_n(\alpha,x)=\int_0^\infty [\Phi_\alpha(\sigma)]^n e^{-n(1+x)\sigma}\,d\sigma.
\end{equation}

To compute \(H_n(\alpha)\), let \(\tau_3\) be the first coalescence time among three sampled lineages of individuals $k$, $j$ and $\ell$ and \(\tau_2\) the additional time until the remaining lineage coalesces with the first pair. Note that \((\tau_3,\tau_2)\) has density
\[
f_2(\tau_3,\tau_2)=3e^{-(3\tau_3+\tau_2)},
\]
for $\tau_3,\tau_2>0$.
There are three equiprobable scenarios. Scenario $1$ is when \(k\) and \(j\) coalesce first, in which the phenotype factor between \(k\) and \(j\) is \(\psi_n(\alpha,\tau_3)\), while the strategy identity between \(j\) and \(\ell\) is conditioned on the total time \(\tau_3+\tau_2\). Hence, we have
\begin{equation}
\begin{split}
H_{n}^{(1)}(\alpha):&=\mathbb E_0\!\left[\varphi(k,j)\,\mathbf 1_{\{S(j)=S(\ell)\}}\Big| \text{Scenario }1\right]\\
&=\int_0^\infty\int_0^\infty3e^{-(3\tau_3+\tau_2)}\psi_n(\alpha,\tau_3)\frac{1+e^{-2\mu(\tau_3+\tau_2)}}{2}\,d\tau_2\,d\tau_3\\
&=\frac32\int_0^\infty \psi_n(\alpha,\tau)\left(e^{-3\tau}+\frac{1}{1+2\mu}e^{-(3+2\mu)\tau}\right)\,d\tau.
\end{split}
\end{equation}
Scenario $2$ is when \(j\) and \(\ell\) coalesce first. In this case, the strategy factor between $j$ and $\ell$ uses \(\tau_3\), but the phenotype factor between \(k\) and \(j\) uses the total time \(\tau_3+\tau_2\):
\begin{equation}
\begin{split}
H_{n}^{(2)}(\alpha):&=\mathbb E_0\!\left[\varphi(k,j)\,\mathbf 1_{\{S(j)=S(\ell)\}}\Big| \text{Scenario }2\right]\\
&=\int_0^\infty\int_0^\infty3e^{-(3\tau_3+\tau_2)}\psi_n(\alpha,\tau_3+\tau_2)\frac{1+e^{-2\mu\tau_3}}{2}\,d\tau_2\,d\tau_3\\
&=\frac34\int_0^\infty \psi_n(\alpha,\tau)\left(\frac{2+\mu}{1+\mu}e^{-\tau}-e^{-3\tau}-\frac{1}{1+\mu}e^{-(3+2\mu)\tau}\right)\,d\tau.
\end{split}
\end{equation}
Scenario 3 is when \(k\) and \(\ell\) coalesce first, under which both the phenotype factor between \(k\) and \(j\) and the strategy identity between \(j\) and \(\ell\) involve the total time \(\tau_3+\tau_2\), leading to
\begin{equation}
\begin{split}
H_{n}^{(3)}(\alpha):&=\mathbb E_0\!\left[\varphi(k,j)\,\mathbf 1_{\{S(j)=S(\ell)\}}\Big| \text{Scenario }3\right]\\
&=\int_0^\infty\int_0^\infty3e^{-(3\tau_3+\tau_2)}\psi_n(\alpha,\tau_3+\tau_2)\frac{1+e^{-2\mu(\tau_3+\tau_2)}}{2}\,d\tau_2\,d\tau_3\\
&=\frac34\int_0^\infty \psi_n(\alpha,\tau)\left(e^{-\tau}+e^{-(1+2\mu)\tau}-e^{-3\tau}-e^{-(3+2\mu)\tau}\right)\,d\tau.
\end{split}
\end{equation}
Summing these three expressions and dividing by \(3\), we obtain
\begin{align}
H_n(\alpha)&=\frac{H_{n}^{(1)}(\alpha)+H_{n}^{(2)}(\alpha)+H_{n}^{(3)}(\alpha)}{3}\notag\\
&=\frac14\int_0^\infty \psi_n(\alpha,\tau)\left(\frac{3+2\mu}{1+\mu}e^{-\tau}+e^{-(1+2\mu)\tau}-\frac{\mu(3+2\mu)}{(1+\mu)(1+2\mu)}e^{-(3+2\mu)\tau}\right)\,d\tau \notag\\
&=\frac{n}{8\nu}\left[\frac{3+2\mu}{1+\mu}\,\mathcal{L}_n\!\left(\alpha,\frac{1}{2\nu}\right)+\mathcal{L}_n\!\left(\alpha,\frac{1+2\mu}{2\nu}\right)-
\frac{\mu(3+2\mu)}{(1+\mu)(1+2\mu)}\,\mathcal{L}_n\!\left(\alpha,\frac{3+2\mu}{2\nu}\right)\right].\label{eq:Halpha-Gtilde-self}
\end{align}

Similarly, the three-point identity
\[
\eta_n(\alpha)
=
\mathbb E_0\!\left[\varphi(k,j)\,\mathbf 1_{\{S(k)=S(j)=S(\ell)\}}\right]
\]
is obtained from the same three-lineage coalescent decomposition. Conditional on the coalescence times \((\tau_3,\tau_2)\), the probability that individuals $k,\,j,\,\ell$ carry the same strategy is
\[
\mathbb P_0\bigl(S(k)=S(j)=S(\ell)\mid \tau_3,\tau_2\bigr)
=
\frac14\Bigl(1+e^{-2\mu\tau_3}+2e^{-2\mu(\tau_3+\tau_2)}\Bigr).
\]
By analogy, if \(k\) and \(j\) coalesce first, then the phenotype factor is \(\psi_n(\alpha,\tau_3)\), whereas in the other two scenarios it is \(\psi_n(\alpha,\tau_3+\tau_2)\). Therefore,
\[
\eta_n(\alpha)=\frac{\eta_n^{(1)}(\alpha)+\eta_n^{(2)}(\alpha)+\eta_n^{(3)}(\alpha)}{3},
\]
with
\begin{align*}
\eta_n^{(1)}(\alpha)
&=
\int_0^\infty\!\!\int_0^\infty
3e^{-(3\tau_3+\tau_2)}
\psi_n(\alpha,\tau_3)
\frac14\Bigl(1+e^{-2\mu\tau_3}+2e^{-2\mu(\tau_3+\tau_2)}\Bigr)\,d\tau_2\,d\tau_3,\\
\eta_n^{(2)}(\alpha)
&=
\int_0^\infty\!\!\int_0^\infty
3e^{-(3\tau_3+\tau_2)}
\psi_n(\alpha,\tau_3+\tau_2)
\frac14\Bigl(1+e^{-2\mu\tau_3}+2e^{-2\mu(\tau_3+\tau_2)}\Bigr)\,d\tau_2\,d\tau_3,\\
\eta_n^{(3)}(\alpha)&=\eta_n^{(2)}(\alpha).
\end{align*}
Carrying out the \(\tau_2\)-integration in the first term and using the change of variables \(\tau=\tau_3+\tau_2\) in the second and third terms, exactly as for \(H_n(\alpha)\), yields
\begin{align}
\eta_n(\alpha)
&=
\frac{n}{8\nu}\left[
\frac{2+\mu}{1+\mu}\,\mathcal{L}_n\!\left(\alpha,\frac{1}{2\nu}\right)
+2\mathcal{L}_n\!\left(\alpha,\frac{1+2\mu}{2\nu}\right)
-\frac{\mu(3+2\mu)}{(1+\mu)(1+2\mu)}\,\mathcal{L}_n\!\left(\alpha,\frac{3+2\mu}{2\nu}\right)
\right].
\label{eq:eta-largeN-self}
\end{align}

\section{\label{appendixB}Proof of Result \ref{thm:alpha-monotonicity-section}}
Write 
\begin{equation}\label{eq37}
\Phi_\alpha(\sigma)=\int_0^\pi e^{\sigma\cos\theta}\, h_\alpha(\theta)\,d\theta,
\end{equation}
where
\begin{equation}\label{eq:poisson-density-alpha}
h_\alpha(\theta)
=
\frac{1-e^{-2\alpha}}{\pi(1-2e^{-\alpha}\cos\theta+e^{-2\alpha})}.
\end{equation}

\begin{lemma}\label{lem:score-alpha-section}
Define the function
\[
w_\alpha(\theta)
:=
\frac{\partial}{\partial\alpha} \log h_\alpha(\theta).
\]
Then \(w_\alpha(\theta)\) is strictly increasing in \(\theta\in(0,\pi)\).
\end{lemma}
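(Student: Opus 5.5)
Set $r=e^{-\alpha}\in(0,1)$ and compute the logarithmic derivative directly. Since $\log h_\alpha(\theta)=\log(1-r^2)-\log\pi-\log(1-2r\cos\theta+r^2)$ and $\partial r/\partial\alpha=-r$, the chain rule gives
\[
w_\alpha(\theta)=-r\,\frac{\partial}{\partial r}\log h_\alpha(\theta)
=\frac{2r^2}{1-r^2}+\frac{r\,(-2\cos\theta+2r)}{1-2r\cos\theta+r^2}.
\]
The first term does not depend on $\theta$. The problem therefore reduces to showing that
\[
g(\theta):=\frac{2r(r-\cos\theta)}{1-2r\cos\theta+r^2}
\]
is strictly increasing on $(0,\pi)$.

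Substitute $t=\cos\theta$. This variable is strictly decreasing in $\theta$ on $(0,\pi)$ and ranges over $(-1,1)$. Write $g=2r\,q(t)$ with $q(t)=(r-t)/(1+r^2-2rt)$. The quotient rule gives
\[
q'(t)=\frac{-(1+r^2-2rt)+2r(r-t)}{(1+r^2-2rt)^2}=\frac{r^2-1}{(1+r^2-2rt)^2}<0,
\]
because $r<1$. The denominator $1+r^2-2rt\ge (1-r)^2>0$, so it never vanishes. Hence $q$ is strictly decreasing in $t$. Composed with the strictly decreasing map $\theta\mapsto\cos\theta$, this makes $g$, and therefore $w_\alpha$, strictly increasing in $\theta$ on $(0,\pi)$.

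As an equivalent check, differentiating directly in $\theta$ gives
\[
\frac{\partial w_\alpha}{\partial\theta}=\frac{2r(1-r^2)\sin\theta}{(1-2r\cos\theta+r^2)^2}>0
\quad\text{for }\theta\in(0,\pi).
\]

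The only point requiring care is the sign bookkeeping: $\partial r/\partial\alpha=-r$ carries a minus sign, and the substitution $t=\cos\theta$ reverses orientation. Beyond that the computation is elementary. The purpose of the lemma is presumably to feed a monotone-likelihood-ratio, Chebyshev-type covariance argument for $\partial_\alpha\Phi_\alpha$ and ultimately $\partial_\alpha\mathcal L_n$. For the lemma itself, the explicit derivative above is all that is needed.
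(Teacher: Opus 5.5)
Your proposal is correct and follows essentially the same route as the paper: compute $w_\alpha$ explicitly from $\log h_\alpha$, then differentiate in $\theta$ to get $2e^{-\alpha}(1-e^{-2\alpha})\sin\theta/(1-2e^{-\alpha}\cos\theta+e^{-2\alpha})^2>0$ on $(0,\pi)$. The substitution $t=\cos\theta$ is only a repackaging of that same derivative computation.
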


\begin{proof}
From \eqref{eq:poisson-density-alpha}, we have
\[
\log h_\alpha(\theta)
=
\log\left(1-e^{-2\alpha}\right)-\log\pi-\log\left(1-2e^{-\alpha}\cos\theta+e^{-2\alpha}\right).
\]
Therefore
\[
w_\alpha(\theta)=\frac{2e^{-2\alpha}}{1-e^{-2\alpha}}+\frac{2e^{-\alpha}(e^{-\alpha}-\cos\theta)}{1-2e^{-\alpha}\cos\theta+e^{-2\alpha}}.
\]
Differentiating with respect to \(\theta\)
gives
\[
\frac{\partial}{\partial\theta}w_\alpha(\theta)
=
\frac{2e^{-\alpha}(1-e^{-2\alpha})\sin\theta}{\bigl(1-2e^{-\alpha}\cos\theta+e^{-2\alpha}\bigr)^2}>0,
\]
for \(\theta\in(0,\pi)\). This completes the proof.
\end{proof}

Define
\(
F_\alpha(\sigma)=[\Phi_\alpha(\sigma)]^n e^{-n\sigma},
\)
 such that
\begin{equation}\label{eq9}
\mathcal{L}_n(\alpha,x)=\int_0^\infty F_\alpha(\sigma)e^{-nx\sigma}\,d\sigma.
\end{equation}

\begin{lemma}\label{lem:cross-partial-alpha-section}
For every \(\alpha>0\), the function
\(
\sigma\longmapsto \dfrac{\partial}{\partial\alpha}\log F_\alpha(\sigma)
\)
is strictly decreasing on \((0,\infty)\).
\end{lemma}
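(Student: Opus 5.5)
Since $\log F_\alpha(\sigma)=n\log\Phi_\alpha(\sigma)-n\sigma$, we have $\partial_\alpha\log F_\alpha(\sigma)=n\,\partial_\alpha\Phi_\alpha(\sigma)/\Phi_\alpha(\sigma)$. It therefore suffices to show that $\sigma\mapsto \partial_\alpha\Phi_\alpha(\sigma)/\Phi_\alpha(\sigma)$ is strictly decreasing. Differentiating under the integral in \eqref{eq37} (legitimate since $h_\alpha$ and $\partial_\alpha h_\alpha$ are bounded and smooth on $[0,\pi]$ for $\alpha>0$) gives
\[
\frac{\partial_\alpha\Phi_\alpha(\sigma)}{\Phi_\alpha(\sigma)}=\frac{\int_0^\pi e^{\sigma\cos\theta}h_\alpha(\theta)w_\alpha(\theta)\,d\theta}{\int_0^\pi e^{\sigma\cos\theta}h_\alpha(\theta)\,d\theta}=\mathbb E_{\sigma}[w_\alpha(\Theta)],
\]
where $\Theta$ has the tilted probability density $p_\sigma(\theta)\propto e^{\sigma\cos\theta}h_\alpha(\theta)$ on $(0,\pi)$. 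The claim is thus that the mean of $w_\alpha$ under this exponential family is strictly decreasing in $\sigma$.

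Differentiate in $\sigma$ using the standard exponential-family identity:
\[
\frac{d}{d\sigma}\mathbb E_\sigma[w_\alpha(\Theta)]=\Cov_\sigma\bigl(w_\alpha(\Theta),\cos\Theta\bigr).
\]
By Lemma \ref{lem:score-alpha-section}, $w_\alpha$ is strictly increasing on $(0,\pi)$, while $\cos\theta$ is strictly decreasing there. Two functions of the same random variable with opposite strict monotonicity have strictly negative covariance (Chebyshev's association inequality). To see this, write
\[
\Cov_\sigma(f,g)=\tfrac12\iint (f(\theta)-f(\theta'))(g(\theta)-g(\theta'))\,p_\sigma(\theta)p_\sigma(\theta')\,d\theta\,d\theta'.
\]
The integrand is strictly negative whenever $\theta\neq\theta'$, and $p_\sigma>0$ on $(0,\pi)$, so $\Cov_\sigma<0$. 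Hence $\sigma\mapsto\partial_\alpha\log F_\alpha(\sigma)$ has strictly negative derivative on $(0,\infty)$, which is the claim. At $\sigma=0$ the same argument applies, so the monotonicity in fact holds on $[0,\infty)$.

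The only genuine step is recognizing the ratio as a tilted expectation and invoking Chebyshev's inequality. What remains is routine: justifying differentiation under the integral in both $\alpha$ and $\sigma$, which is harmless because everything is smooth and bounded on a compact $\theta$-interval, and checking that $h_\alpha>0$, so that $p_\sigma$ is a genuine positive density.
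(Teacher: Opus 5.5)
Your proposal is correct and follows the same route as the paper. The paper's proof likewise writes $\partial_\alpha\log\Phi_\alpha(\sigma)$ as the mean of $w_\alpha$ under the tilted measure $\mathbb P_{\alpha,\sigma}$, differentiates in $\sigma$ to obtain $\Cov_{\mathbb P_{\alpha,\sigma}}(w_\alpha(\theta),\cos\theta)$, and concludes by Chebyshev's inequality with Lemma~\ref{lem:score-alpha-section}; your double-integral argument for strict negativity spells out a step the paper only asserts.
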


\begin{proof}
Define the probability measure \(\mathbb P_{\alpha,\sigma}\) on \([0,\pi]\) by
\[
d\mathbb P_{\alpha,\sigma}(\theta)
=
\frac{e^{\sigma\cos\theta}h_\alpha(\theta)}{\Phi_\alpha(\sigma)}\,d\theta.
\]
Using \eqref{eq37}, we have
\begin{equation}\label{eq6}
\frac{\partial}{\partial\alpha} \log \Phi_\alpha(\sigma)
=
\frac{\int_0^\pi e^{\sigma\cos\theta}\,\frac{\partial}{\partial\alpha} h_\alpha(\theta)\,d\theta}
{\Phi_\alpha(\sigma)}
=
\int_0^\pi w_\alpha(\theta)\,d\mathbb P_{\alpha,\sigma}(\theta).
\end{equation}
Differentiating the last identity with respect to \(\sigma\) yields
\begin{align*}
\frac{\partial^2}{\partial \sigma\partial\alpha}\log \Phi_\alpha(\sigma)&=  \frac{\partial}{\partial \sigma}\left[
\int_0^\pi w_\alpha(\theta)\,\frac{e^{\sigma\cos\theta}h_\alpha(\theta)}{\Phi_\alpha(\sigma)}\,d\theta
\right] \\
&=\int_0^\pi w_\alpha(\theta)\left[\cos\theta\,\frac{e^{\sigma\cos\theta}h_\alpha(\theta)}{\Phi_\alpha(\sigma)}-\frac{\frac{\partial}{\partial \sigma}\Phi_\alpha(\sigma)}{\Phi_\alpha(\sigma)}\frac{e^{\sigma\cos\theta}h_\alpha(\theta)}{\Phi_\alpha(\sigma)}\right]\,d\theta\\
&=\int_0^\pi w_\alpha(\theta)\cos\theta\,d\mathbb P_{\alpha,\sigma}(\theta)-\frac{\frac{\partial}{\partial \sigma}\Phi_\alpha(\sigma)}{\Phi_\alpha(\sigma)}\int_0^\pi w_\alpha(\theta)d\mathbb P_{\alpha,\sigma}(\theta).
\end{align*}
On the other hand, we have
\begin{equation}
 \frac{\frac{\partial}{\partial \sigma}\Phi_\alpha(\sigma)}{\Phi_\alpha(\sigma)}= \frac{\frac{\partial}{\partial \sigma}\int_0^\pi e^{\sigma\cos\theta}h_\alpha(\theta)\,d\theta  }{\Phi_\alpha(\sigma)}
 =\int_0^\pi \cos\theta\,\frac{e^{\sigma\cos\theta}h_\alpha(\theta)}{\Phi_\alpha(\sigma)}\,d\theta=\int_0^\pi \cos(\theta)d\mathbb P_{\alpha,\sigma}(\theta).
\end{equation}
This shows that
\[
\frac{\partial^2}{\partial \sigma\partial\alpha}\log \Phi_\alpha(\sigma)
=
\Cov_{\mathbb P_{\alpha,\sigma}}\!\bigl(w_\alpha(\theta),\cos\theta\bigr).
\]
By Lemma~\ref{lem:score-alpha-section}, \(w_\alpha(\theta)\) is strictly increasing in \(\theta\). As \(\cos\theta\) is strictly decreasing on \((0,\pi)\), then Chebyshev's covariance inequality on the line gives
\[
\Cov_{\mathbb P_{\alpha,\sigma}}\!\bigl(w_\alpha(\theta),\cos\theta\bigr)<0.
\]
The inequality is strict since neither function is \(\mathbb P_{\alpha,\sigma}\)-a.s.\ constant. Noting that
\[
\frac{\partial^2}{\partial \sigma\partial\alpha}\log F_\alpha(\sigma)
=
n\,\frac{\partial^2}{\partial \sigma\partial\alpha}\log \Phi_\alpha(\sigma),
\]
completes the proof.
\end{proof}
\begin{lemma}\label{lem:ordering-alpha-section}
For \(a>0\), define
\[
X(a)
:=
\frac{\partial}{\partial\alpha} \log \mathcal{L}_n(\alpha,a).
\]
Then \(X(a)\) is strictly increasing in \(a\). In addition, we have
\(
X(a)<0.
\)
\end{lemma}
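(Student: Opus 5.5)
We write $X(a)$ as an expectation of the $\alpha$-score $\partial_\alpha\log F_\alpha(\sigma)$ under a tilted probability measure on $\sigma\in(0,\infty)$. Its monotonicity in $a$ then follows from the monotone likelihood ratio in $a$ combined with Lemma~\ref{lem:cross-partial-alpha-section}.

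Concretely, from \eqref{eq9}, differentiating under the integral sign gives
\[
X(a)=\frac{\int_0^\infty \partial_\alpha F_\alpha(\sigma)\,e^{-na\sigma}\,d\sigma}{\int_0^\infty F_\alpha(\sigma)e^{-na\sigma}\,d\sigma}
=\int_0^\infty g_\alpha(\sigma)\,d\mathbb Q_{a}(\sigma),
\]
where $g_\alpha(\sigma)=\partial_\alpha\log F_\alpha(\sigma)$ and $d\mathbb Q_a(\sigma)=F_\alpha(\sigma)e^{-na\sigma}d\sigma/\mathcal L_n(\alpha,a)$. Differentiation is justified by dominated convergence, using $0\le\Phi_\alpha(\sigma)\le e^{\sigma}$ (so $F_\alpha\le1$). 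The derivative $\partial_\alpha h_\alpha=w_\alpha h_\alpha$ is bounded by a multiple of $h_\alpha$ uniformly for $\alpha$ in compacts of $(0,\infty)$, hence $|\partial_\alpha F_\alpha|\le C F_\alpha$ locally uniformly.

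Differentiating once more, now in $a$:
\[
X'(a)=-n\,\mathrm{Cov}_{\mathbb Q_a}\bigl(g_\alpha(\sigma),\sigma\bigr).
\]
By Lemma~\ref{lem:cross-partial-alpha-section}, $g_\alpha$ is strictly decreasing in $\sigma$, while $\sigma$ is increasing. Chebyshev's covariance inequality gives a strictly negative covariance, since $\mathbb Q_a$ has a positive density on $(0,\infty)$ and so neither function is a.s.\ constant. Hence $X'(a)>0$. Equivalently, the family $\mathbb Q_a$ is stochastically decreasing in $a$ (its likelihood ratio $e^{-n(a'-a)\sigma}$ is decreasing), so expectations of decreasing functions increase. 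Finiteness of $\int\sigma\,d\mathbb Q_a$ is clear from $F_\alpha\le 1$ and the exponential factor.

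For the sign, I will show that $g_\alpha(\sigma)<0$ for every $\sigma>0$, which gives $X(a)<0$ directly. Since $g_\alpha$ is strictly decreasing, it suffices to check $\lim_{\sigma\to0^+}g_\alpha(\sigma)\le0$. At $\sigma=0$, $\Phi_\alpha(0)=\int_0^\pi h_\alpha=1$ for all $\alpha$ by the Poisson kernel normalization (equivalently $I_m(0)=\delta_{m0}$). Thus $\partial_\alpha\log\Phi_\alpha(0)=0$, and by \eqref{eq6} with continuity in $\sigma$, $g_\alpha(0^+)=0$. Strict decrease then yields $g_\alpha(\sigma)<0$ for $\sigma>0$, so $X(a)<0$.

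As a sanity check, $\partial_\alpha\Phi_\alpha=-2\sum_{m\ge1}me^{-\alpha m}I_m(\sigma)<0$ for $\sigma>0$, directly from \eqref{eq:Phi-alpha-def-self}. This alternative route to the sign avoids the limit argument, and I would include it as the primary argument since it is cleaner.

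The main obstacle is only technical: justifying the interchanges of derivative and integral, and the continuity of $g_\alpha$ at $\sigma=0$. The structural steps are immediate from Lemma~\ref{lem:cross-partial-alpha-section}.
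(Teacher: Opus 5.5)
Your proposal is correct and follows essentially the paper's route. Both arguments write $X(a)$ as the $\mathbb{Q}_a$-expectation of the strictly decreasing score $\partial_\alpha\log F_\alpha(\sigma)$ and use the decreasing likelihood ratio $e^{-n(a'-a)\sigma}$; your identity $X'(a)=-n\,\mathrm{Cov}_{\mathbb{Q}_a}\bigl(\partial_\alpha\log F_\alpha(\sigma),\sigma\bigr)$ is just the infinitesimal form of the paper's first-order stochastic-dominance comparison, and your series computation $\partial_\alpha\Phi_\alpha(\sigma)<0$ is exactly the paper's sign argument. Your pointwise inequality $\partial_\alpha\log F_\alpha(\sigma)<0$ for $\sigma>0$ gives the strictness of $X(a)<0$ more cleanly than the paper's appeal to $\alpha\mapsto\mathcal{L}_n(\alpha,a)$ being strictly decreasing, and your dominated-convergence remarks justify the interchanges of derivative and integral that the paper leaves implicit.
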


\begin{proof}
Define the probability measure \(Q_{\alpha,a}\) on \((0,\infty)\) by
\[
dQ_{\alpha,a}(\sigma)
=
\frac{F_\alpha(\sigma)e^{-na\sigma}}{\mathcal{L}_n(\alpha,a)}\,d\sigma.
\]
Then
\[
X(a)
=
\int_0^\infty \frac{\partial}{\partial\alpha} \log F_\alpha(\sigma)\,dQ_{\alpha,a}(\sigma).
\]
Fix \(a'>a\). The Radon--Nikodym derivative of \(Q_{\alpha,a'}\) with respect to \(Q_{\alpha,a}\) is
\[
\frac{dQ_{\alpha,a'}}{dQ_{\alpha,a}}(\sigma)
=
\frac{\mathcal{L}_n(\alpha,a)}{\mathcal{L}_n(\alpha,a')}e^{-n(a'-a)\sigma},
\]
which is strictly decreasing in \(\sigma\). Hence \(Q_{\alpha,a'}\) is first-order stochastically smaller than \(Q_{\alpha,a}\). By Lemma~\ref{lem:cross-partial-alpha-section}, the function
\[
\sigma\longmapsto \frac{\partial}{\partial\alpha} \log F_\alpha(\sigma)
\]
is strictly decreasing. Therefore
\[
X(a')
=
\mathbb E_{Q_{\alpha,a'}}\!\left[\frac{\partial}{\partial\alpha}\log F_\alpha(\sigma)\right]
>
\mathbb E_{Q_{\alpha,a}}\!\left[\frac{\partial}{\partial\alpha}\log F_\alpha(\sigma)\right]
=
X(a).
\]
This proves that \(X\) is strictly increasing. Finally, \(\mathcal{L}_n(\alpha,a)\) is strictly decreasing in \(\alpha\) for every \(a>0\), because \(\Phi_\alpha(\sigma)\) is strictly decreasing in \(\alpha\) for every \(\sigma>0\). Hence \(X(a)=\frac{\partial}{\partial\alpha} \log \mathcal{L}_n(\alpha,a)<0\).
\end{proof}

Let us introduce the shorthand notations $a_1=\dfrac{1}{2\nu}$, $a_2=\dfrac{1+2\mu}{2\nu}$ and $a_3=\dfrac{3+2\mu}{2\nu}$. 

\begin{theorem}
For every \(n\ge 1\), \(\mu>0\), \(\nu>0\), and \(\alpha>0\), we have
\[
\frac{\partial}{\partial\alpha}\beta_n(\alpha)<0.
\]
\end{theorem}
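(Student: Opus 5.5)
The plan is to differentiate the explicit formula \eqref{eq12} and reduce the sign of \(\partial\beta_n/\partial\alpha\) to the ordering of logarithmic derivatives in Lemma~\ref{lem:ordering-alpha-section}. Write \(L_i=\mathcal{L}_n(\alpha,a_i)\) for \(i=1,2,3\), and \(X_i=X(a_i)=\partial_\alpha\log L_i\), so that \(\partial_\alpha L_i=X_iL_i\). Set \(E=1+2\mu\), \(A=E^2\), \(B=\mu(3+2\mu)\) and \(C=(1+\mu)(1+2\mu)\). Then \(\beta_n(\alpha)=\mathcal N/\mathcal D\), where
\[
\mathcal N=AL_1-CL_2+BL_3,\qquad \mathcal D=-EL_1+CL_2+BL_3 .
\]
Write \(n_i\) and \(d_i\) for the coefficient of \(L_i\) in \(\mathcal N\) and \(\mathcal D\), respectively. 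We have \(\partial_\alpha\beta_n=(\mathcal N'\mathcal D-\mathcal N\mathcal D')/\mathcal D^2\), where the prime denotes \(\partial_\alpha\). It therefore suffices to show \(\mathcal N'\mathcal D-\mathcal N\mathcal D'<0\); no information on the sign of \(\mathcal D\) is needed.

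Expanding the bilinear form, the diagonal terms cancel and
\[
\mathcal N'\mathcal D-\mathcal N\mathcal D'=\sum_{i<j}(n_id_j-n_jd_i)(X_i-X_j)L_iL_j .
\]
The coefficients are as follows.
\begin{itemize}
\item Pair \((1,2)\): \(n_1d_2-n_2d_1=AC-CE=2\mu CE>0\).
\item Pair \((1,3)\): \(n_1d_3-n_3d_1=B(A+E)=BE(2+2\mu)=2BC\).
\item Pair \((2,3)\): \(n_2d_3-n_3d_2=-2BC\).
\end{itemize}
Since \(a_1<a_2<a_3\), Lemma~\ref{lem:ordering-alpha-section} gives \(X_1<X_2<X_3<0\). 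Hence the \((1,2)\) and \((1,3)\) terms are strictly negative.

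The \((2,3)\) term has the wrong sign, and this is the main obstacle: it equals \(+2BC(X_3-X_2)L_2L_3>0\). I would absorb it into the \((1,3)\) term by observing that the two terms share the same coefficient \(2BC\). Their sum is
\[
-2BC\,L_3\bigl[(X_3-X_1)L_1-(X_3-X_2)L_2\bigr].
\]
Now \(X_3-X_1>X_3-X_2>0\) by Lemma~\ref{lem:ordering-alpha-section}. Also \(L_1>L_2>0\), because \(x\mapsto\mathcal{L}_n(\alpha,x)\) is strictly decreasing, as noted after \eqref{eq12}. So the bracket is strictly positive, and the combined contribution is strictly negative since \(B,C,L_3>0\).

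Adding the strictly negative \((1,2)\) term gives \(\mathcal N'\mathcal D-\mathcal N\mathcal D'<0\), hence \(\partial_\alpha\beta_n(\alpha)<0\). The remaining routine checks are the algebraic identity \(A+E=2C/E\cdot E\), i.e. \(E(E+1)=2(1+\mu)E\), and the justification of differentiating under the integral in \(\mathcal{L}_n\). The latter follows from the dominated convergence already implicit in Lemma~\ref{lem:ordering-alpha-section}.
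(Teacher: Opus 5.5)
Your proof is correct and follows the paper's argument. Both use the quotient rule, substitute $\partial_\alpha\mathcal{L}_n(\alpha,a_i)=X(a_i)\mathcal{L}_n(\alpha,a_i)$, and invoke the ordering from Lemma~\ref{lem:ordering-alpha-section}; the wrong-signed $(2,3)$ term is absorbed into the $(1,3)$ term using $\mathcal{L}_n(\alpha,a_1)>\mathcal{L}_n(\alpha,a_2)$, which the paper does by writing $X(a_3)-X(a_1)=\Delta_{12}+\Delta_{23}$ and which expands to exactly your grouping.
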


\begin{proof}
Write 
\begin{equation}\label{eq:beta-alpha-section}
\beta_n(\alpha)
=\frac{M_1(\alpha)}{M_2(\alpha)},
\end{equation}
where
\begin{align*}
M_1(\alpha)&=(1+2\mu)^2\mathcal{L}_n(\alpha,a_1)+\mu(3+2\mu)\mathcal{L}_n(\alpha,a_3)-(1+\mu)(1+2\mu)\mathcal{L}_n(\alpha,a_2),\\
M_2(\alpha)&=-(1+2\mu)\mathcal{L}_n(\alpha,a_1)+\mu(3+2\mu)\mathcal{L}_n(\alpha,a_3)+(1+\mu)(1+2\mu)\mathcal{L}_n(\alpha,a_2).
\end{align*}
To show that $\alpha\rightarrow\beta_n(\alpha)$ is decreasing, it suffices to show that
\begin{equation}\label{eq38}
M_2(\alpha)\frac{\partial}{\partial\alpha} M_1(\alpha)-M_1(\alpha)\frac{\partial}{\partial\alpha} M_2(\alpha)<0,
\end{equation}
for any $\alpha>0$. 
Differentiating, expanding and rearranging the different terms, we get
\begin{equation}\label{eq4}
\begin{split}
 & M_2(\alpha)\frac{\partial}{\partial\alpha} M_1(\alpha)-M_1(\alpha)\frac{\partial}{\partial\alpha} M_2(\alpha)  \\
  =&2\mu(1+\mu)(1+2\mu)^2
\left[\mathcal{L}_n(\alpha,a_2) \frac{\partial}{\partial\alpha} \mathcal{L}_n(\alpha,a_1)-\mathcal{L}_n(\alpha,a_1) \frac{\partial}{\partial\alpha} \mathcal{L}_n(\alpha,a_2)\right]\\
&+
2\mu(1+2\mu)(1+\mu)(3+2\mu)
\left[\mathcal{L}_n(\alpha,a_3) \frac{\partial}{\partial\alpha} \mathcal{L}_n(\alpha,a_1)-\mathcal{L}_n(\alpha,a_1) \frac{\partial}{\partial\alpha} \mathcal{L}_n(\alpha,a_3)\right]\\
&-2\mu(1+\mu)(1+2\mu)(3+2\mu)\left[\mathcal{L}_n(\alpha,a_3) \frac{\partial}{\partial\alpha} \mathcal{L}_n(\alpha,a_2)-\mathcal{L}_n(\alpha,a_2) \frac{\partial}{\partial\alpha} \mathcal{L}_n(\alpha,a_3)\right].
\end{split}
\end{equation}
On the other hand, we have \[
X(a)=\frac{\partial}{\partial\alpha} \log \mathcal{L}_n(\alpha,a)=\frac{\frac{\partial}{\partial\alpha} \mathcal{L}_n(\alpha,a)}{\mathcal{L}_n(\alpha,a)},
\]
Then, we obtain
$$\frac{\partial}{\partial\alpha} \mathcal{L}_n(\alpha,a)= X(a)\mathcal{L}_n(\alpha,a).$$
Inserting this expression in \eqref{eq4} yields
\begin{equation}\label{eq5}
\begin{split}
  &M_2(\alpha)\frac{\partial}{\partial\alpha} M_1(\alpha)-M_1(\alpha)\frac{\partial}{\partial\alpha} M_2(\alpha)  \\
  =&2\mu(1+\mu)(1+2\mu)^2\mathcal{L}_n(\alpha,a_2)\mathcal{L}_n(\alpha,a_1)
\left[ X(a_1)-X(a_2)\right]\\
&+
2\mu(1+\mu)(1+2\mu)(3+2\mu)\mathcal{L}_n(\alpha,a_1)\mathcal{L}_n(\alpha,a_3)
\left[X(a_1)-X(a_3)\right]\\
&-2\mu(1+\mu)(1+2\mu)(3+2\mu)\mathcal{L}_n(\alpha,a_3) \mathcal{L}_n(\alpha,a_2)\left[X(a_2)-X(a_3)\right]\\
=&-2\mu(1+\mu)(1+2\mu)\mathcal{L}_n(\alpha,a_1)\Bigl[
(1+2\mu)\mathcal{L}_n(\alpha,a_2)
+(3+2\mu)\mathcal{L}_n(\alpha,a_3)
\Bigr]
\Delta_{12}\\
&
-2\mu(1+\mu)(1+2\mu)(3+2\mu)\mathcal{L}_n(\alpha,a_3)\left[
\mathcal{L}_n(\alpha,a_1)
-
 \mathcal{L}_n(\alpha,a_2)
\right]\Delta_{23},
\end{split}
\end{equation}
where $\Delta_{12}=X(a_2)-X(a_1)$ and $\Delta_{23}=X(a_3)-X(a_2)$. By Lemma~\ref{lem:ordering-alpha-section}, note that $\Delta_{12}>0$ and $\Delta_{23}>0$.
On the other hand, since \(a_1<a_2<a_3\) and \(x\mapsto \mathcal{L}_n(\alpha,x)\) is strictly decreasing, we have
\[
\mathcal{L}_n(\alpha,a_1)>\mathcal{L}_n(\alpha,a_2)>\mathcal{L}_n(\alpha,a_3)>0.
\]
Substituting these inequalities into \eqref{eq5} shows the inequality in \eqref{eq38}. 
\end{proof}


\section{\label{AppendixC}Proof of Result \ref{thm:n-monotonicity-section}}
Using the change of variables \(u=n\sigma\) in the integral form in Eq. \eqref{eq15}, we have
\begin{equation}\label{eq:Gtilde-Laplace-n-section}
\mathcal{L}_n(\alpha,x)=\frac1n\int_0^\infty \widetilde f_n(u)e^{-xu}\,du=\frac1n\,\mathcal L[\widetilde f_n](x),
\end{equation}
where 
\begin{equation}\label{eq:f-tilde-n-section}
\widetilde f_n(u)=\bigl[\Phi_\alpha(u/n)\bigr]^ne^{-u}.
\end{equation}
The expression in \eqref{eq:Gtilde-Laplace-n-section} is extended for any real $n>0$. 
Thus, the entire dependence of $\mathcal{L}_n(\alpha,x)$ on \(n\) is now transferred to the family of functions \(\widetilde f_n\).
Define
$$\phi_\alpha(\sigma)=\log \Phi_\alpha(\sigma)-\sigma.$$
\begin{lemma}\label{lem:phi-convex-n-section}
For every \(\alpha>0\) and $\sigma>0$, we have $\dfrac{\partial^2}{\partial \sigma^2}\phi_\alpha(\sigma)>0$.
\end{lemma}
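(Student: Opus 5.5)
The plan is to recognize $\log\Phi_\alpha$ as a cumulant generating function, so that its second derivative is a variance. Since $\phi_\alpha(\sigma)=\log\Phi_\alpha(\sigma)-\sigma$ differs from $\log\Phi_\alpha(\sigma)$ only by a linear term, it suffices to show $\frac{\partial^2}{\partial\sigma^2}\log\Phi_\alpha(\sigma)>0$.

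\emph{Step 1: setup.} Start from the representation \eqref{eq37}, $\Phi_\alpha(\sigma)=\int_0^\pi e^{\sigma\cos\theta}h_\alpha(\theta)\,d\theta$, where $h_\alpha$ in \eqref{eq:poisson-density-alpha} is strictly positive and continuous on $[0,\pi]$ for $\alpha>0$. Indeed, $1-2e^{-\alpha}\cos\theta+e^{-2\alpha}\ge(1-e^{-\alpha})^2>0$. Because the integrand is smooth in $\sigma$ and bounded by $e^{|\sigma|}h_\alpha(\theta)$ locally uniformly, differentiation under the integral sign is justified. This gives $\Phi_\alpha'(\sigma)=\int_0^\pi \cos\theta\, e^{\sigma\cos\theta}h_\alpha(\theta)\,d\theta$ and $\Phi_\alpha''(\sigma)=\int_0^\pi \cos^2\theta\, e^{\sigma\cos\theta}h_\alpha(\theta)\,d\theta$.

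\emph{Step 2: the variance identity.} Use the tilted probability measure $\mathbb P_{\alpha,\sigma}$ already introduced in the proof of Lemma~\ref{lem:cross-partial-alpha-section}, namely $d\mathbb P_{\alpha,\sigma}(\theta)=e^{\sigma\cos\theta}h_\alpha(\theta)\,d\theta/\Phi_\alpha(\sigma)$. Then
\[
\frac{\partial^2}{\partial\sigma^2}\log\Phi_\alpha(\sigma)=\frac{\Phi_\alpha''(\sigma)}{\Phi_\alpha(\sigma)}-\left(\frac{\Phi_\alpha'(\sigma)}{\Phi_\alpha(\sigma)}\right)^2=\Var_{\mathbb P_{\alpha,\sigma}}(\cos\theta).
\]
This is the same computation as the covariance identity in that proof, with $w_\alpha$ replaced by $\cos\theta$.

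\emph{Step 3: strict positivity.} The measure $\mathbb P_{\alpha,\sigma}$ has a strictly positive density with respect to Lebesgue measure on $(0,\pi)$. The function $\cos\theta$ is not almost everywhere constant there; it is strictly decreasing. Hence the variance is strictly positive, for instance by the strict Chebyshev inequality $\Cov(\cos\theta,\cos\theta)>0$ for a strictly monotone function under a non-degenerate measure. It follows that $\frac{\partial^2}{\partial\sigma^2}\phi_\alpha(\sigma)=\Var_{\mathbb P_{\alpha,\sigma}}(\cos\theta)>0$.

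There is essentially no hard step. The only point requiring care is justifying differentiation under the integral and the non-degeneracy of the tilted measure, both of which follow from the positivity and boundedness of $h_\alpha$ on the compact interval $[0,\pi]$. The argument works for all $\sigma\in\mathbb R$, not only $\sigma>0$, and it does not need the normalization $\int_0^\pi h_\alpha=1$.
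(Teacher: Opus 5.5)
Your proposal is correct and follows essentially the same route as the paper. The paper also writes $\frac{\partial^2}{\partial\sigma^2}\phi_\alpha(\sigma)=\frac{\partial^2}{\partial\sigma^2}\log\Phi_\alpha(\sigma)$ as $\Var_{\mathbb P_{\alpha,\sigma}}(\cos\theta)$ under the same tilted measure, and concludes strict positivity because $\cos\theta$ is not a.s.\ constant; your justification of differentiation under the integral and of the positive density is a small addition to that argument.
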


\begin{proof}
Similarly to \eqref{eq6}, we have
\[
\frac{\partial}{\partial \sigma}\log \Phi_\alpha(\sigma)
=
\int_0^\pi \cos\theta\,d\mathbb P_{\alpha,\sigma}(\theta),
\]
from which, we get
\begin{align*}
\frac{\partial^2}{\partial \sigma^2}\log \Phi_\alpha(\sigma)&=  \frac{\partial}{\partial \sigma}\left[
\int_0^\pi \cos(\theta)\,\frac{e^{\sigma\cos\theta}h_\alpha(\theta)}{\Phi_\alpha(\sigma)}\,d\theta
\right] \\
&=\int_0^\pi \cos(\theta)\left[\cos\theta\,\frac{e^{\sigma\cos\theta}h_\alpha(\theta)}{\Phi_\alpha(\sigma)}-\frac{\frac{\partial}{\partial \sigma}\Phi_\alpha(\sigma)}{\Phi_\alpha(\sigma)}\frac{e^{\sigma\cos\theta}h_\alpha(\theta)}{\Phi_\alpha(\sigma)}\right]\,d\theta\\
&=\int_0^\pi \cos^2\theta\,d\mathbb P_{\alpha,\sigma}(\theta)-\frac{\frac{\partial}{\partial \sigma}\Phi_\alpha(\sigma)}{\Phi_\alpha(\sigma)}\int_0^\pi \cos(\theta)d\mathbb P_{\alpha,\sigma}(\theta)\\
&=\Var_{\mathbb P_{\alpha,\sigma}}(\cos\theta).
\end{align*}
Since \(\cos\theta\) is not almost surely constant under \(\mathbb P_{\alpha,\sigma}\), the variance is strictly positive. Hence
\[
\frac{\partial^2}{\partial \sigma^2}\phi_\alpha(\sigma)
=
\frac{\partial^2}{\partial \sigma^2}\log \Phi_\alpha(\sigma)
=
\Var_{\mathbb P_{\alpha,\sigma}}(\cos\theta)>0.
\]
\end{proof}
The next step is to understand how \(\widetilde f_n\) changes with \(n\). For that, define
\begin{equation}\label{eq:g-function-n-section}
g_\alpha(\sigma):=\phi_\alpha(\sigma)-\sigma\frac{\partial}{\partial \sigma}\phi_\alpha(\sigma).
\end{equation}

\begin{lemma}\label{lem:g-negative-n-section}
For every \(\alpha>0\), the function \(\sigma\mapsto g_\alpha(\sigma)\) is strictly decreasing on \((0,\infty)\). In particular, \(g_\alpha(\sigma)<0\) for every \(\sigma>0\).
\end{lemma}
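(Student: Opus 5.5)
We prove the monotonicity by differentiating $g_\alpha$ directly, and then obtain the sign from the value of $g_\alpha$ at the origin. Since $\phi_\alpha$ is smooth on $(0,\infty)$, we have
\[
\frac{\partial}{\partial\sigma}g_\alpha(\sigma)=\frac{\partial}{\partial \sigma}\phi_\alpha(\sigma)-\frac{\partial}{\partial \sigma}\phi_\alpha(\sigma)-\sigma\frac{\partial^2}{\partial \sigma^2}\phi_\alpha(\sigma)=-\sigma\frac{\partial^2}{\partial \sigma^2}\phi_\alpha(\sigma).
\]
By Lemma~\ref{lem:phi-convex-n-section}, $\frac{\partial^2}{\partial \sigma^2}\phi_\alpha(\sigma)=\Var_{\mathbb P_{\alpha,\sigma}}(\cos\theta)>0$ for every $\sigma>0$. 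Hence $\frac{\partial}{\partial\sigma}g_\alpha(\sigma)<0$ on $(0,\infty)$, and $g_\alpha$ is strictly decreasing there. Geometrically, $g_\alpha(\sigma)$ is the intercept at $0$ of the tangent line to the strictly convex function $\phi_\alpha$ at $\sigma$.

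For the second claim, it suffices to show that $\lim_{\sigma\to0^+}g_\alpha(\sigma)=g_\alpha(0)\le 0$. Strict monotonicity then gives $g_\alpha(\sigma)<g_\alpha(0)\le0$ for all $\sigma>0$.

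To compute $g_\alpha(0)$, we note that $\Phi_\alpha$ extends smoothly to $\sigma=0$, because the integral representation \eqref{eq37} is an entire function of $\sigma$. Moreover, $\Phi_\alpha(0)=\int_0^\pi h_\alpha(\theta)\,d\theta=1$: the Poisson kernel integrates to one, or equivalently $I_0(0)=1$ and $I_m(0)=0$ for $m\ge1$ in \eqref{eq:Phi-alpha-def-self}. Thus $\phi_\alpha(0)=\log 1-0=0$. The derivative term is bounded near $0$, since
\[
\frac{\partial}{\partial \sigma}\phi_\alpha(\sigma)=\int_0^\pi\cos\theta\,d\mathbb P_{\alpha,\sigma}(\theta)-1
\]
lies in $[-2,0]$. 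It follows that $\sigma\frac{\partial}{\partial \sigma}\phi_\alpha(\sigma)\to0$ as $\sigma\to0^+$. Therefore $g_\alpha(0^+)=0$, and consequently $g_\alpha(\sigma)<0$ for every $\sigma>0$.

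The only step requiring care is the boundary behaviour at $\sigma=0$, namely justifying the continuity of $\phi_\alpha$ and the boundedness of $\phi_\alpha'$ there. Both follow from the integral representation, since $h_\alpha$ is bounded on $[0,\pi]$ for $\alpha>0$, so dominated convergence allows differentiation under the integral sign. As an alternative that avoids limits altogether, convexity gives $\phi_\alpha(0)\ge\phi_\alpha(\sigma)-\sigma\phi_\alpha'(\sigma)$, that is $g_\alpha(\sigma)\le\phi_\alpha(0)=0$. Strictness comes from strict convexity on $(0,\sigma)$.
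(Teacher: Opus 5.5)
Your proof is correct and matches the paper's argument: differentiating gives $\frac{\partial}{\partial\sigma}g_\alpha(\sigma)=-\sigma\frac{\partial^2}{\partial\sigma^2}\phi_\alpha(\sigma)<0$ by Lemma~\ref{lem:phi-convex-n-section}, and the sign follows from $g_\alpha(0)=\phi_\alpha(0)=\log\Phi_\alpha(0)=0$. You also justify the passage to $\sigma=0$, using continuity of $\phi_\alpha$ and the bound $\frac{\partial}{\partial\sigma}\phi_\alpha\in[-2,0]$, and you give the tangent-line convexity reformulation; both are sound refinements of a step the paper states without comment.
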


\begin{proof}
Differentiating \eqref{eq:g-function-n-section} gives
\[
\frac{\partial}{\partial \sigma} g_\alpha(\sigma)
=
-\sigma\frac{\partial^2}{\partial \sigma^2}\phi_\alpha(\sigma).
\]
By Lemma~\ref{lem:phi-convex-n-section}, \(\dfrac{\partial^2}{\partial \sigma^2}\phi_\alpha(\sigma)>0\) for \(\sigma>0\), so that \(\dfrac{\partial}{\partial \sigma} g_\alpha(\sigma)<0\). This shows that 
$\sigma\rightarrow g_\alpha(\sigma)$ is strictly decreasing. On the other hand, 
\[
g_\alpha(0)=\phi_\alpha(0)=\log\Phi_\alpha(0)=0,
\]
because 
\[
\Phi_\alpha(0)=\frac{1 - e^{-2\alpha}}{\pi} \int_0^\pi \frac{1}{1 - 2e^{-\alpha}\cos\theta + e^{-2\alpha}}  d\theta=1.
\]
Therefore \(g_\alpha(\sigma)<0\) for all \(\sigma>0\).

\end{proof}
Using Eq. \eqref{eq:f-tilde-n-section}, we have
\[
\log \widetilde f_n(u)=n\phi_\alpha(u/n).
\]
Differentiating this expression with respect to \(n\) yields
\begin{equation}\label{eq7}
\frac{\partial}{\partial n}\log \widetilde f_n(u)
=
\phi_\alpha(u/n)-\frac{u}{n}\frac{\partial}{\partial \sigma}\phi_\alpha(u/n)
=
g_\alpha(u/n).
\end{equation}

\begin{lemma}\label{lem:laplace-ratio-n-section}
Fix \(b>a>0\). Then
\[
n\rightarrow R_n(a,b):=\frac{\mathcal L[\widetilde f_n](b)}{\mathcal L[\widetilde f_n](a)}
\]
is strictly increasing.
\end{lemma}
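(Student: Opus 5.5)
We differentiate $\log R_n(a,b)$ with respect to $n$ and express the derivative as a difference of expectations under two tilted probability measures, in the same style as the proof of Lemma~\ref{lem:ordering-alpha-section}. For $x>0$, define the probability measure $\widetilde Q_{n,x}$ on $(0,\infty)$ by
\[
d\widetilde Q_{n,x}(u)=\frac{\widetilde f_n(u)e^{-xu}}{\mathcal L[\widetilde f_n](x)}\,du .
\]
Differentiating under the integral sign and using \eqref{eq7}, we get
\[
\frac{\partial}{\partial n}\log \mathcal L[\widetilde f_n](x)=\int_0^\infty g_\alpha(u/n)\,d\widetilde Q_{n,x}(u).
\]
It follows that
\[
\frac{\partial}{\partial n}\log R_n(a,b)=\mathbb E_{\widetilde Q_{n,b}}\bigl[g_\alpha(u/n)\bigr]-\mathbb E_{\widetilde Q_{n,a}}\bigl[g_\alpha(u/n)\bigr].
\]
The lemma follows once this difference is shown to be strictly positive for every real $n>0$.

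Next, we compare the two measures. Since $b>a$, the Radon--Nikodym derivative
\[
\frac{d\widetilde Q_{n,b}}{d\widetilde Q_{n,a}}(u)=\frac{\mathcal L[\widetilde f_n](a)}{\mathcal L[\widetilde f_n](b)}\,e^{-(b-a)u}
\]
is strictly decreasing in $u$. This monotone likelihood ratio implies that $\widetilde Q_{n,b}$ is strictly first-order stochastically smaller than $\widetilde Q_{n,a}$. By Lemma~\ref{lem:g-negative-n-section}, $u\mapsto g_\alpha(u/n)$ is strictly decreasing. Hence its expectation is strictly larger under the stochastically smaller measure, so $\mathbb E_{\widetilde Q_{n,b}}[g_\alpha(u/n)]>\mathbb E_{\widetilde Q_{n,a}}[g_\alpha(u/n)]$. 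Strictness holds because both measures have positive densities on all of $(0,\infty)$ and $g_\alpha$ is strictly monotone, so neither measure sees $g_\alpha$ as almost surely constant. For a self-contained argument, we can write the difference as the covariance
\[
\Cov_{\widetilde Q_{n,a}}\!\Bigl(g_\alpha(u/n),\,\tfrac{d\widetilde Q_{n,b}}{d\widetilde Q_{n,a}}(u)\Bigr),
\]
which is the covariance of two strictly decreasing functions and is therefore positive by Chebyshev's inequality. This shows that $\partial_n \log R_n(a,b)>0$, so $n\mapsto R_n(a,b)$ is strictly increasing.

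The main obstacle is justifying differentiation under the integral sign in $n$, together with finiteness of the integrals involved. Both require integrable domination of $\widetilde f_n(u)e^{-xu}$ and of $|g_\alpha(u/n)|\,\widetilde f_n(u)e^{-xu}$, locally uniformly in $n$. Three facts handle this. First, $0<\Phi_\alpha(\sigma)\le I_0(\sigma)+2\sum_{m\ge1}I_m(\sigma)=e^{\sigma}$, so $0<\widetilde f_n(u)\le 1$. Second, $\partial_\sigma\phi_\alpha(\sigma)=\mathbb E_{\mathbb P_{\alpha,\sigma}}[\cos\theta]-1\in[-2,0]$. Third, $\phi_\alpha(\sigma)\le 0$ and $\phi_\alpha(\sigma)\ge \log\Phi_\alpha(0)-2\sigma=-2\sigma$. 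Together these give $|g_\alpha(u/n)|\le 4u/n$. For $n$ in a compact subset of $(0,\infty)$, this bound combined with $e^{-xu}$ yields the required integrable dominating function, and the rest of the argument goes through.
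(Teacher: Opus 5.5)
Your proof is correct and is essentially the paper's argument. The paper writes $\log R_n(a,b)=-\int_a^b m_n(x)\,dx$ and shows $\partial_n m_n(x)=\Cov_{Q_{n,x}}(u,g_\alpha(u/n))<0$; you instead differentiate in $n$ first and compare the tilts at $x=a$ and $x=b$ directly, as in the paper's proof of Lemma~\ref{lem:ordering-alpha-section}, and this gives the same quantity because $\partial_x \mathbb E_{Q_{n,x}}[g_\alpha(u/n)]=-\Cov_{Q_{n,x}}(u,g_\alpha(u/n))$. Your bound $|g_\alpha(u/n)|\le 4u/n$ also justifies differentiating under the integral sign, a step the paper leaves implicit.
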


\begin{proof}
For \(x>0\), define the probability measure \(Q_{n,x}\) on \((0,\infty)\) by
\[
dQ_{n,x}(u)=\frac{\widetilde f_n(u)e^{-xu}}{\mathcal L[\widetilde f_n](x)}\,du.
\]
Let
\(
m_n(x)=\mathbb E_{Q_{n,x}}[u].
\)
Since
\[
\frac{\partial}{\partial x}\log \mathcal L[\widetilde f_n](x)
=\frac{\frac{\partial}{\partial x}\mathcal L[\widetilde f_n](x)}{ \mathcal L[\widetilde f_n](x)}
=-\frac{\int_0^\infty u\,\widetilde f_n(u)e^{-xu}\,du}{ \mathcal L[\widetilde f_n](x)}
=-\int_0^\infty u\,dQ_{n,x}(u)
=-m_n(x),
\]
integrating this identity leads to
$$
\log R_n(a,b)=\log \mathcal L[\widetilde f_n](b)-\log \mathcal L[\widetilde f_n](a)=-\int_a^b m_n(x)\,dx.
$$
Now, differentiating with respect to $n$, we obtain
\[
\frac{\partial}{\partial n}\log R_n(a,b)
=
-\int_a^b \frac{\partial}{\partial n}m_n(x)\,dx.
\]
On the other hand, by \eqref{eq7}
\[
\frac{\partial}{\partial n} \widetilde f_n(u)=g_\alpha(u/n)\widetilde f_n(u),
\]
which yields
\begin{align*}
\frac{\partial}{\partial n}m_n(x)&=\frac{\partial}{\partial n} \left(\frac{\int_0^\infty u\,\widetilde f_n(u)e^{-xu}\,du}{ \int_0^\infty\widetilde f_n(u)e^{-xu}\,du} \right)\\
&=  \frac{\int_0^\infty u\,\frac{\partial}{\partial n}\widetilde f_n(u)e^{-xu}\,du}{ \int_0^\infty\widetilde f_n(u)e^{-xu}\,du}-\frac{\int_0^\infty u\,\widetilde f_n(u)e^{-xu}\,du\times \int_0^\infty \frac{\partial}{\partial n}\widetilde f_n(u)e^{-xu}\,du}{ (\int_0^\infty\widetilde f_n(u)e^{-xu}\,du)^2}\\
&=\int_0^\infty u\,g_\alpha(u/n)\,dQ_{n,x}(u)-\int_0^\infty u\,dQ_{n,x}(u)\times \int_0^\infty g_\alpha(u/n)\,dQ_{n,x}(u)\\
&=\Cov_{Q_{n,x}}\!\bigl(u,g_\alpha(u/n)\bigr).
\end{align*}
The function \(u\mapsto u\) is strictly increasing, whereas \(u\mapsto g_\alpha(u/n)\) is strictly decreasing by Lemma~\ref{lem:g-negative-n-section}. Therefore, Chebyshev's covariance inequality  gives
\[
\dfrac{\partial}{\partial n}m_n(x)=\Cov_{Q_{n,x}}\!\bigl(u,g_\alpha(u/n)\bigr)<0,
\]
so that
\[
\frac{\partial}{\partial n}\log R_n(a,b)=-\int_a^b \frac{\partial}{\partial n}m_n(x)\,dx>0.
\]
Therefore \(n\rightarrow R_n(a,b)\) is strictly increasing.
\end{proof}

\begin{theorem}
Extend \(\beta_n(\alpha)\) from integer \(n\) to real \(n>0\) via \eqref{eq:Gtilde-Laplace-n-section}. Then,
for every \(\alpha,\,\mu,\,\nu>0\),  \(n\rightarrow\beta_n(\alpha)\) is strictly decreasing on $(0,\infty)$.
In particular, we have
\(
\beta_{n+1}(\alpha)<\beta_n(\alpha),
\)
for every integer $n\ge1$.
\end{theorem}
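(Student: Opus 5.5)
The plan is to reuse the structure of the proof of Result~\ref{thm:alpha-monotonicity-section} from Appendix~\ref{appendixB}, with the parameter $\alpha$ replaced by the (now real) dimension $n$.

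\textbf{Step 1: write $\beta_n$ in ratio form.} By \eqref{eq:Gtilde-Laplace-n-section}, $\mathcal{L}_n(\alpha,a_i)=\frac1n\mathcal L[\widetilde f_n](a_i)$. The common factor $1/n$ cancels between numerator and denominator of \eqref{eq12}. Hence $\beta_n(\alpha)=M_1(n)/M_2(n)$, where $M_1,M_2$ are the same linear combinations as in \eqref{eq:beta-alpha-section}, now built from $\mathcal L[\widetilde f_n](a_i)$, $i=1,2,3$.

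\textbf{Step 2: the $n$-analogue of Lemma~\ref{lem:ordering-alpha-section}.} Define
\[
Y(a):=\frac{\partial}{\partial n}\log \mathcal L[\widetilde f_n](a).
\]
For $b>a$, Lemma~\ref{lem:laplace-ratio-n-section} gives
\[
\frac{\partial}{\partial n}\log R_n(a,b)=Y(b)-Y(a)>0,
\]
so $Y$ is strictly increasing in $a$. In particular, $Y(a_1)<Y(a_2)<Y(a_3)$, since $a_1<a_2<a_3$. This plays exactly the role that Lemma~\ref{lem:ordering-alpha-section} played for $X(a)$. Note that only differences $Y(a_i)-Y(a_j)$ will be needed, so the sign of $Y$ itself is irrelevant. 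This is convenient, because the analogue of $X<0$ is not obviously available here.

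\textbf{Step 3: the Wronskian-type computation.} I would compute $M_2\,\partial_n M_1-M_1\,\partial_n M_2$. The coefficients of $M_1,M_2$ do not depend on $n$, so the algebra leading to \eqref{eq4} and \eqref{eq5} goes through verbatim with $\partial_\alpha$ replaced by $\partial_n$ and $X$ replaced by $Y$. The only inputs used there were $\partial\mathcal L=Y\mathcal L$ and bilinearity; the $-1/n$ from $\log(1/n)$ cancels in every difference anyway. This gives
\[
M_2\partial_nM_1-M_1\partial_nM_2=-2\mu(1+\mu)(1+2\mu)\Bigl\{\mathcal L_1\bigl[(1+2\mu)\mathcal L_2+(3+2\mu)\mathcal L_3\bigr]\Delta_{12}+(3+2\mu)\mathcal L_3(\mathcal L_1-\mathcal L_2)\Delta_{23}\Bigr\},
\]
where $\mathcal L_i=\mathcal L[\widetilde f_n](a_i)$ and $\Delta_{ij}=Y(a_j)-Y(a_i)>0$. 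Strict monotonicity in $x$ gives $\mathcal L_1>\mathcal L_2>\mathcal L_3>0$, so the right-hand side is strictly negative.

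\textbf{Step 4: conclude.} I would check that $M_2>0$. Since $\beta_n>1$, this follows from $M_2>0$ established for the denominator, or from positivity of $G_n-H_n$ via \eqref{eq10}. Then $\partial_n\beta_n=(M_2\partial_nM_1-M_1\partial_nM_2)/M_2^2<0$. Hence $n\mapsto\beta_n(\alpha)$ is strictly decreasing on $(0,\infty)$, and in particular $\beta_{n+1}(\alpha)<\beta_n(\alpha)$ for every integer $n\ge1$.

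\textbf{Main obstacle.} The algebra is already done in Appendix~\ref{appendixB}. The real technical point is justifying differentiation in $n$ under the integral sign in $\mathcal L[\widetilde f_n]$ and in $m_n$, as is implicitly used in Lemma~\ref{lem:laplace-ratio-n-section}. I would handle this by local domination on compact $n$-intervals. By Lemma~\ref{lem:phi-convex-n-section}, $\phi_\alpha\le0$ with $\phi_\alpha(0)=0$ and $\phi_\alpha(\sigma)\ge-\sigma$. Moreover, $|g_\alpha(\sigma)|$ grows at most linearly in $\sigma$, since $\partial_\sigma\phi_\alpha\in[-2,0]$. Together these give integrable bounds of the form $C(1+u)^2e^{-xu}$, which make dominated convergence apply.
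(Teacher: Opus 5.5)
Your proposal is correct and is essentially the paper's proof. Both arguments rest on Lemma~\ref{lem:laplace-ratio-n-section}. The paper divides by $\mathcal L[\widetilde f_n](a_1)$ and differentiates $\beta_n$ as a function of $R_n(a_1,a_2)$ and $R_n(a_1,a_3)$ by the chain rule, while you run the Appendix~\ref{appendixB} Wronskian algebra with $Y(a_j)-Y(a_i)=\partial_n\log R_n(a_i,a_j)>0$; this is the same computation with different bookkeeping.

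Two small remarks. First, the sign of $M_2$ in Step~4 is irrelevant, since it enters only as $M_2^2$. Second, your domination argument supplies the differentiation-under-the-integral justification that the paper leaves implicit, although the bounds $-\sigma\le\phi_\alpha(\sigma)\le 0$ come from $1\le\Phi_\alpha(\sigma)\le e^{\sigma}$, not from convexity.
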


\begin{proof}
Write 
\begin{equation}
\begin{split}
\beta_n(\alpha)
&=\frac{(1+2\mu)^2\mathcal{L}_n(\alpha,a_1)+\mu(3+2\mu)\mathcal{L}_n(\alpha,a_3)-(1+\mu)(1+2\mu)\mathcal{L}_n(\alpha,a_2)}{-(1+2\mu)\mathcal{L}_n(\alpha,a_1)+\mu(3+2\mu)\mathcal{L}_n(\alpha,a_3)+(1+\mu)(1+2\mu)\mathcal{L}_n(\alpha,a_2)}\\
&=\frac{(1+2\mu)^2+\mu(3+2\mu)R_n(a_1,a_3)-(1+\mu)(1+2\mu)R_n(a_1,a_2)}{\mu(3+2\mu)R_n(a_1,a_3)+(1+\mu)(1+2\mu)R_n(a_1,a_2)-(1+2\mu)}.
\end{split}
\end{equation}
By a simple differentiation, we obtain
\begin{equation}\label{eq8}
\begin{split}
\frac{\partial}{\partial n}\beta_n(\alpha)
&=-\frac{2\mu(1+\mu)(1+2\mu)(3+2\mu)\bigl(1 - R_n(a_1,a_2)\bigr)}{\Big[\mu(3+2\mu)R_n(a_1,a_3)+(1+\mu)(1+2\mu)R_n(a_1,a_2)-(1+2\mu)\Big]^2}\times\frac{\partial}{\partial n}(R_n(a_1,a_3))\\
&\quad-\frac{2\mu (1+\mu)(1+2\mu)\Bigl((1+2\mu) + (3+2\mu)R_n(a_1,a_3)\Bigr)}{\Big[\mu(3+2\mu)R_n(a_1,a_3)+(1+\mu)(1+2\mu)R_n(a_1,a_2)-(1+2\mu)\Big]^2}\times \frac{\partial}{\partial n}(R_n(a_1,a_2)).
\end{split}
\end{equation}
Since \(a_2>a_1\) and \(a_3>a_1\), Lemma~\ref{lem:laplace-ratio-n-section} implies that $\dfrac{\partial}{\partial n}(R_n(a_1,a_j))>0$ for $j=2,3$. On the other hand, as \(x\mapsto \mathcal{L}_n(\alpha,x)\) is strictly decreasing, we have
$$
0<R_n(a_1,a_j)=\frac{\mathcal{L}_n(\alpha,a_j)}{\mathcal{L}_n(\alpha,a_1)}<1,
$$
for $j=2,3$.
Combining these inequalities in \eqref{eq8} shows that $\dfrac{\partial}{\partial n}\beta_n(\alpha)<0$.
\end{proof}

\section{\label{AppendixD}Asymptotic Analysis}


\subsection{\label{AppendixD-part3}Proof of Eq. \eqref{eq:beta-infty-alpha-section}}
\begin{lemma}
We have the asymptotic expansion 
\begin{equation}
\mathcal{L}_n(\alpha, x) \sim \frac{1}{n(x + c_\alpha)},\qquad\text{for}\quad n\rightarrow\infty.
\end{equation}
\end{lemma}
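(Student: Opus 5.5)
Our plan is to use the rescaled representation \eqref{eq:Gtilde-Laplace-n-section}, $n\mathcal{L}_n(\alpha,x)=\int_0^\infty \widetilde f_n(u)e^{-xu}\,du$ with $\widetilde f_n(u)=[\Phi_\alpha(u/n)]^n e^{-u}=\exp\{n\phi_\alpha(u/n)\}$. The claim is then equivalent to
\[
\lim_{n\to\infty}\int_0^\infty \widetilde f_n(u)e^{-xu}\,du=\frac{1}{x+c_\alpha},
\]
and we will prove it by dominated convergence.

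\textbf{Step 1 (pointwise limit).} We compute $\phi_\alpha'(0)$. From the series $\Phi_\alpha(\sigma)=I_0(\sigma)+2\sum_{m\ge1}e^{-\alpha m}I_m(\sigma)$ we get $\Phi_\alpha(\sigma)=1+2e^{-\alpha}\cdot\frac{\sigma}{2}+O(\sigma^2)=1+e^{-\alpha}\sigma+O(\sigma^2)$. Equivalently, differentiating the integral form \eqref{eq37} at $\sigma=0$ gives $\int_0^\pi\cos\theta\,h_\alpha(\theta)\,d\theta=e^{-\alpha}$, the first Fourier coefficient of the Poisson kernel. Hence
\[
\phi_\alpha(\sigma)=\log\Phi_\alpha(\sigma)-\sigma=-(1-e^{-\alpha})\sigma+O(\sigma^2)=-c_\alpha\sigma+O(\sigma^2).
\]
For fixed $u$ this yields $n\phi_\alpha(u/n)=-c_\alpha u+O(u^2/n)\to -c_\alpha u$, so $\widetilde f_n(u)\to e^{-c_\alpha u}$.

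\textbf{Step 2 (domination).} This is the step that needs care. Lemma \ref{lem:phi-convex-n-section} shows that $\phi_\alpha$ is convex with $\phi_\alpha(0)=0$, so $\sigma\mapsto\phi_\alpha(\sigma)/\sigma$ is nondecreasing. Convexity alone therefore only gives a lower bound $\phi_\alpha(\sigma)\ge -c_\alpha\sigma$, not an upper bound. We use the structure of $\Phi_\alpha$ instead. Since $h_\alpha$ is a probability density on $[0,\pi]$,
\[
\Phi_\alpha(\sigma)\le e^{\sigma}\quad\text{and hence}\quad\phi_\alpha\le0 .
\]
This gives $\widetilde f_n(u)\le 1$, and $e^{-xu}$ is integrable for $x>0$, which supplies a uniform integrable dominator. (If one also wants a $u$-decaying bound, for instance to treat $x$ near $0$, note that $\phi_\alpha'(\sigma)=\mathbb E_{\mathbb P_{\alpha,\sigma}}[\cos\theta]-1<0$, together with $\Phi_\alpha(\sigma)\sim C e^{\sigma}/\sqrt{\sigma}$ for large $\sigma$. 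Then $\phi_\alpha(\sigma)\le -\kappa\min(\sigma,\log(1+\sigma))$, which gives a polynomially decaying dominator. This is not needed for $x>0$.)

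\textbf{Step 3 (conclusion).} Dominated convergence gives $n\mathcal{L}_n(\alpha,x)\to\int_0^\infty e^{-(x+c_\alpha)u}\,du=1/(x+c_\alpha)$, which is the asserted equivalence.

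We then substitute $x=a_1,a_2,a_3$ into \eqref{eq12}. After clearing the factors of $2\nu$, the resulting rational expression in $\mu,\nu,c_\alpha$ reduces to \eqref{eq:beta-infty-alpha-section}. The main obstacle is Step 2, namely securing the uniform bound $\phi_\alpha\le0$, which follows directly from the probabilistic representation of $\Phi_\alpha$.
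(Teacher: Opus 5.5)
Your proof is correct. It rests on the same two facts as the paper's: the bound $\Phi_\alpha(\sigma)\le e^{\sigma}$, which you use as the dominator $e^{-xu}$ and the paper uses to control the tail $\int_\delta^\infty e^{-nx\sigma}\,d\sigma$, and the linearization $\Phi_\alpha(\sigma)=1+e^{-\alpha}\sigma+O(\sigma^2)$, which gives slope $-c_\alpha$ at the origin. The only difference is packaging: after rescaling $u=n\sigma$ you let dominated convergence do the work, whereas the paper carries out the equivalent Laplace-type $\varepsilon$--$\delta$ sandwich on $[0,\delta]$ by hand.
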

\begin{proof}
Fix \(x>0\). The generalized Laplace transform is given by
\begin{equation*}
\mathcal{L}_n(\alpha, x) = \int_0^\infty e^{-nx\sigma} [\Phi_\alpha(\sigma)]^n e^{-n\sigma} \, d\sigma = \int_0^\infty e^{n h(\sigma)} \, d\sigma,
\end{equation*}
where
\begin{equation*}
h(\sigma) = \ln\Phi_\alpha(\sigma)- (x+1)\sigma.
\end{equation*}
Since \(e^{-\alpha|m|}<1\) for every \(m\neq0\), we have
\[
\Phi_\alpha(\sigma)=\sum_{m=-\infty}^{\infty}e^{-\alpha|m|}I_{|m|}(\sigma)
<
\sum_{m=-\infty}^{\infty}I_{|m|}(\sigma)=e^\sigma.
\]
Hence \(h(0)=0\) and
\[
h(\sigma)=\log\!\bigl(\Phi_\alpha(\sigma)e^{-\sigma}\bigr)-x\sigma<-x\sigma<0,
\]
for \(\sigma>0\).
Therefore \(\sigma=0\) is the unique global maximizer of \(h\) on \([0,\infty)\).

Using the small-argument expansion of the modified Bessel functions, we have
\[
\Phi_\alpha(\sigma) = 1 + e^{-\alpha}\sigma + O(\sigma^2).
\]
Substituting this into \(h(\sigma)\) and expanding the logarithm, we obtain
\[
h(\sigma) = (e^{-\alpha}\sigma + O(\sigma^2)) - x\sigma = -(x+c_\alpha)\sigma + O(\sigma^2),
\]
where \(c_\alpha = 1 - e^{-\alpha}\).

Fix \(\varepsilon\in(0,x+c_\alpha)\). By the expansion above, there exists \(\delta>0\) such that
\[
-(x+c_\alpha+\varepsilon)\sigma\le h(\sigma)\le -(x+c_\alpha-\varepsilon)\sigma,
\]
for \(0\le \sigma\le \delta\).
It follows that
\[
\int_0^\delta e^{-n(x+c_\alpha+\varepsilon)\sigma}\,d\sigma
\le
\int_0^\delta e^{nh(\sigma)}\,d\sigma
\le
\int_0^\delta e^{-n(x+c_\alpha-\varepsilon)\sigma}\,d\sigma.
\]
For the tail, the bound \(h(\sigma)<-x\sigma\) yields
\[\int_\delta^\infty e^{nh(\sigma)}\,d\sigma
\le
\int_\delta^\infty e^{-nx\sigma}\,d\sigma
=
\frac{e^{-nx\delta}}{nx}.
\]
Combining the last two estimates gives
\[
\frac{1-e^{-n(x+c_\alpha+\varepsilon)\delta}}{n(x+c_\alpha+\varepsilon)}
\le
\mathcal{L}_n(\alpha,x)
\le
\frac{1-e^{-n(x+c_\alpha-\varepsilon)\delta}}{n(x+c_\alpha-\varepsilon)}+\frac{e^{-nx\delta}}{nx}.
\]
Multiplying by \(n\) and letting \(n\to\infty\), we obtain
\[
\frac{1}{x+c_\alpha+\varepsilon}
\le
\liminf_{n\to\infty} n\mathcal{L}_n(\alpha,x)
\le
\limsup_{n\to\infty} n\mathcal{L}_n(\alpha,x)
\le
\frac{1}{x+c_\alpha-\varepsilon}.
\]
Finally, letting \(\varepsilon\downarrow0\) yields the desired asymptotic formula.
\end{proof}
Evaluating this asymptotic expression for $x=\dfrac{1}{2\nu},\,\dfrac{1+2\mu}{2\nu},\,\dfrac{3+2\mu}{2\nu}$ and substituting the results into Eq. \eqref{eq12}, we obtain that
\begin{equation}
\beta_{n}(\alpha)\sim \frac{4c_\alpha^2\nu^2 + (8\mu+6)c_\alpha\nu + 4\mu^2 + 8\mu + 3}{4c_\alpha\nu(c_\alpha\nu + \mu + 1)},\qquad\text{for}\quad n\rightarrow\infty.
\end{equation}
\subsection{\label{AppendixD-part1}Proof of Eq. \eqref{eq:Gtilde-low-nu-section}}

Note the expansion
$$I_m(\sigma) = \frac{(\sigma/2)^m}{m!}  + O(\sigma^{m+2}),\quad\text{for}\quad\sigma\rightarrow0^+,$$
for any $m\in\mathbb{N}$. Retaining terms up to $\sigma$ and substituting this into Eq. \eqref{eq:Phi-alpha-def-self} yields 
\begin{equation}
\begin{split}
\Phi_\alpha(\sigma) 
&=I_0(\sigma)+2\sum_{m=1}^{\infty}e^{-\alpha m}I_m(\sigma)=1 + e^{-\alpha}\sigma + O(\sigma^2).
\end{split}
\end{equation}
On the other hand, note the expansions $(1 + x)^n = 1 + nx + O(x^2)$ and $e^x = 1 + x + O(x^2)$, for small $x$. Then, we have
\begin{align*}
[\Phi_\alpha(\sigma)]^n e^{-n\sigma} &= \left(1 + n e^{-\alpha}\sigma + O(\sigma^2)\right)\left(1 - n\sigma + O(\sigma^2)\right) \\
&= 1-nc_\alpha \sigma+O(\sigma^2),
\end{align*}
for $\sigma\rightarrow0^+$, where $c_\alpha=1-e^{-\alpha}$.
Hence, as \(x\to\infty\), we obtain the expansion
\begin{align*}
\mathcal{L}_n(\alpha,x)
&= \int_0^\infty  [\Phi_\alpha(\sigma)]^n e^{-n\sigma} e^{-nx\sigma} d\sigma\\
&= \frac{1}{x}\int_0^\infty  [\Phi_\alpha(u/x)]^n e^{-nu/x} e^{-nu} du\\
&=\frac{1}{x}\int_0^\infty \Bigl(1-nc_\alpha \frac{u}{x}+O((u/x)^2)\Bigr)\,e^{-nu}\,du\\
&=
\frac{1}{nx}-\frac{c_\alpha}{nx^2}+O(x^{-3}).
\end{align*}
\subsection{\label{AppendixD-part2}Proof of  Eqs. \eqref{eq14} and \eqref{eq:beta-high-nu-n1-section}}
In this section, we will prove Eqs. \eqref{eq14} and \eqref{eq:beta-high-nu-n1-section}.
To establish the asymptotic behavior of $\mathcal{L}_n(\alpha,x)$ as $x\rightarrow0^+$, we first need to study the asymptotic behavior of the phenotypic function $\Phi_\alpha(\sigma)$ for large values of $\sigma$.
\begin{lemma}\label{lem:Phi-large-sigma}
For any fixed discrimination parameter $\alpha > 0$, we have
\[
\Phi_\alpha(\sigma)
\sim
\coth\!\left(\frac{\alpha}{2}\right)\frac{e^\sigma}{\sqrt{2\pi\sigma}}
\qquad\text{as}\quad\sigma\to\infty.
\]
\end{lemma}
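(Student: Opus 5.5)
The plan is to apply Laplace's method to the integral representation \eqref{eq37},
\[
\Phi_\alpha(\sigma)=\int_0^\pi e^{\sigma\cos\theta}\,h_\alpha(\theta)\,d\theta .
\]
As $\sigma\to\infty$ the exponent $\sigma\cos\theta$ attains its unique maximum on $[0,\pi]$ at the endpoint $\theta=0$. There $\cos\theta=1-\theta^2/2+O(\theta^4)$, so the phase is quadratic. The weight $h_\alpha$ is smooth and strictly positive on $[0,\pi]$ because $1-2e^{-\alpha}\cos\theta+e^{-2\alpha}\ge(1-e^{-\alpha})^2>0$. The leading asymptotic should therefore be
\[
\Phi_\alpha(\sigma)\sim h_\alpha(0)\,e^{\sigma}\int_0^\infty e^{-\sigma\theta^2/2}\,d\theta=h_\alpha(0)\,e^\sigma\sqrt{\frac{\pi}{2\sigma}} .
\]

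First I will compute the constant:
\[
h_\alpha(0)=\frac{1-e^{-2\alpha}}{\pi(1-e^{-\alpha})^2}=\frac{1+e^{-\alpha}}{\pi(1-e^{-\alpha})}=\frac{1}{\pi}\coth\!\left(\frac{\alpha}{2}\right).
\]
Hence $h_\alpha(0)\sqrt{\pi/(2\sigma)}=\coth(\alpha/2)/\sqrt{2\pi\sigma}$, which is exactly the claimed form. As a cross-check, when $\alpha\to0$ the weight becomes formally $\delta$-like, and the known asymptotic $I_0(\sigma)\sim e^\sigma/\sqrt{2\pi\sigma}$ corresponds to $h\equiv1/\pi$, which is consistent.

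Next I will make the Laplace argument rigorous by splitting the integral at a small $\epsilon>0$.

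\emph{Tail $[\epsilon,\pi]$.} Here $e^{\sigma\cos\theta}\le e^{\sigma\cos\epsilon}$ and $h_\alpha$ is bounded, so this piece is $O(e^{\sigma(1-c\epsilon^2)})$. After multiplying by $e^{-\sigma}\sqrt{\sigma}$ it tends to zero.

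\emph{Piece $[0,\epsilon]$.} By continuity of $h_\alpha$, on this interval $h_\alpha(\theta)$ lies within $h_\alpha(0)(1\pm\eta)$. Also $1-\theta^2/2\le\cos\theta\le1-\theta^2/2+\theta^4/24$, which gives $\cos\theta\ge1-(1+\eta)\theta^2/2$ for small $\epsilon$. The resulting Gaussian integrals over $[0,\epsilon]$ differ from those over $[0,\infty)$ by exponentially small amounts. Multiplying by $e^{-\sigma}\sqrt{\sigma}$ and letting $\sigma\to\infty$ and then $\eta\to0$ yields the liminf and limsup bounds, in the same squeeze style used for the $n\to\infty$ lemma above.

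An alternative route is to use the series $\Phi_\alpha=\sum_m e^{-\alpha|m|}I_{|m|}$ together with $I_m(\sigma)\sim e^\sigma/\sqrt{2\pi\sigma}$ for each fixed $m$. Summing gives $\sum_m e^{-\alpha|m|}=\coth(\alpha/2)$, and dominated convergence is justified by $I_m\le I_0$. I would mention this as a consistency check.

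The main obstacle is minor: making the endpoint Laplace method uniform, in particular checking that the half-Gaussian factor $\sqrt{\pi/(2\sigma)}$ is correct because the maximum occurs at the boundary $\theta=0$ rather than in the interior. There is no genuine difficulty, since $h_\alpha$ is continuous and positive at $0$.
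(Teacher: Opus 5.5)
Your proposal is correct and follows essentially the paper's route: endpoint Laplace's method on the Poisson-kernel representation, with an exponentially small tail away from $\theta=0$ and a half-Gaussian main term near $\theta=0$ giving $h_\alpha(0)\sqrt{\pi/(2\sigma)}$ with $h_\alpha(0)=\coth(\alpha/2)/\pi$. The only difference is in the near-zero piece: the paper rescales $u=\sqrt{\sigma}\,\theta$ and applies dominated convergence using the bound $1-\cos\theta\ge\theta^2/4$, whereas you squeeze with $(1\pm\eta)$ bounds on $h_\alpha$ and on the quadratic phase. Your series alternative, which combines the termwise asymptotics $I_m(\sigma)\sim e^\sigma/\sqrt{2\pi\sigma}$, domination via $I_m\le I_0$, and $\sum_m e^{-\alpha|m|}=\coth(\alpha/2)$, is also a complete proof. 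Note that the $h\equiv 1/\pi$ (that is, $I_0$) consistency check corresponds to the limit $\alpha\to\infty$, not $\alpha\to 0$.
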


\begin{proof}
Let \(r=e^{-\alpha}\in(0,1)\). From the Poisson-kernel representation, write 
\[
\Phi_\alpha(\sigma)
=
\frac{1-r^2}{\pi}e^\sigma
\int_0^\pi \frac{e^{-\sigma(1-\cos\theta)}}{q_r(\theta)}\,d\theta,
\]
where $q_r(\theta):=1-2r\cos\theta+r^2$.
Note that \(q_r(\theta)\geq q_r(0)=(1-r)^2>0\) for $\theta\in[0,\pi]$. To prove the desired estimate, it suffices to estimate
\[
J_\sigma:=\int_0^\pi \frac{e^{-\sigma(1-\cos\theta)}}{q_r(\theta)}\,d\theta .
\]
As $\lim_{\theta\rightarrow0^{+}}\dfrac{1-\cos\theta}{\theta^2}=1/2$, then there exists
\(\delta\in(0,\pi)\), such that
\begin{equation}\label{eq34}
1-\cos\theta\ge\theta^2/4 \qquad\text{for}\quad \theta\in[0,\delta].
\end{equation}
Split the integral $J_\sigma=J^{(1)}_\sigma+J^{(2)}_\sigma$, where 
$J^{(1)}_\sigma=\int_0^\delta \frac{e^{-\sigma(1-\cos\theta)}}{q_r(\theta)}\,d\theta$ and 
$J^{(2)}_\sigma=\int_\delta^\pi \frac{e^{-\sigma(1-\cos\theta)}}{q_r(\theta)}\,d\theta$.

For the first integral, with the change of variables \(u=\sqrt{\sigma}\,\theta\), we have
\[
\sqrt{\sigma}\,J^{(1)}_\sigma
=
\int_0^{\delta\sqrt{\sigma}}
\frac{\exp\!\bigl(-\sigma(1-\cos(u/\sqrt{\sigma}))\bigr)}
{q_r(u/\sqrt{\sigma})}\,du.
\]
For each fixed \(u\ge0\), we have
\begin{equation}
\lim_{\sigma\rightarrow\infty}\frac{\exp\!\bigl(-\sigma(1-\cos(u/\sqrt{\sigma}))\bigr)}
{q_r(u/\sqrt{\sigma})}=\frac{e^{-u^2/2}}{(1-r)^2}.
\end{equation}
By Eq. \eqref{eq34}, we have 
\[
0\le
\frac{\exp\!\bigl(-\sigma(1-\cos(u/\sqrt{\sigma}))\bigr)}
{q_r(u/\sqrt{\sigma})}
\le
\frac{e^{-u^2/4}}{(1-r)^2}
\qquad\text{whenever }u\le\delta\sqrt{\sigma}.
\]
Using the dominated convergence theorem yields
\[
\lim_{\sigma\rightarrow\infty}\sqrt{\sigma}\,J^{(1)}_\sigma=\frac{1}{(1-r)^2}\int_0^\infty e^{-u^2/2}\,du=\frac{1}{(1-r)^2}\sqrt{\frac{\pi}{2}},
\]
which shows that
\begin{equation}\label{eq35}
J^{(1)}_\sigma\sim \frac{1}{(1-r)^2}\sqrt{\frac{\pi}{2\sigma}}
\end{equation}
as $\sigma\rightarrow\infty$.

For the second integral $J^{(2)}_\sigma$, note that
\[1-\cos\theta\ge 1-\cos\delta>0,\quad\text{for}\quad\theta\in[\delta,\pi].\]
Therefore, we have
\begin{equation}\label{eq36}
J^{(2)}_\sigma=\int_\delta^\pi \frac{e^{-\sigma(1-\cos\theta)}}{q_r(\theta)}\,d\theta
=
O\!\left(e^{-\sigma(1-\cos\delta)}\right)=o\left(\sigma^{-1/2}\right),
\end{equation}

 Combining \eqref{eq35} and \eqref{eq36} leads to 
\[
J_\sigma\sim\frac{1}{(1-r)^2}\sqrt{\frac{\pi}{2\sigma}}.
\]
Substituting this result back into the expression for $\Phi_\alpha(\sigma)$ yields
\[
\Phi_\alpha(\sigma)\sim\frac{1-r^2}{\pi}e^\sigma\frac{1}{(1-r)^2}\sqrt{\frac{\pi}{2\sigma}}=\coth\!\left(\frac{\alpha}{2}\right)\frac{e^\sigma}{\sqrt{2\pi\sigma}}.
\]
\end{proof}
Set $C_\alpha:=\coth\!\left(\frac{\alpha}{2}\right)$ and $A_n(\sigma):=[\Phi_\alpha(\sigma)]^n e^{-n\sigma}$.
By Lemma~\ref{lem:Phi-large-sigma}, we have
\[
A_n(\sigma)\sim \frac{C_\alpha^n}{(2\pi\sigma)^{n/2}},
\]
when \(\sigma\to\infty\). We now analyze
\[
\mathcal{L}_n(\alpha,x)=\int_0^\infty A_n(\sigma)e^{-nx\sigma}\,d\sigma.
\]

When \(n=1\), fix \(\varepsilon\in(0,1)\). There exists \(S>0\) such that for every \(\sigma\ge S\), 
\begin{equation}
(1-\varepsilon)\frac{C_\alpha}{\sqrt{2\pi\sigma}}\le A_1(\sigma)\le(1+\varepsilon)\frac{C_\alpha}{\sqrt{2\pi\sigma}}.
\end{equation}
Then, we have
\begin{equation}\label{eq32}
\frac{(1-\varepsilon)C_\alpha}{\sqrt{2\pi}}\int_{S}^{\infty}\sigma^{-1/2} e^{-x\sigma}d\sigma\le \int_{S}^{\infty}A_1(\sigma)e^{-x\sigma}d\sigma\le\frac{(1+\varepsilon)C_\alpha}{\sqrt{2\pi}}\int_{S}^{\infty} \sigma^{-1/2}e^{-x\sigma}d\sigma.
\end{equation}
Using the change of variable \(t=x\sigma\), we obtain
\[
\int_S^\infty e^{-x\sigma}\sigma^{-1/2}\,d\sigma=\frac{1}{\sqrt{x}}\int_{xS}^\infty e^{-t}t^{-1/2}\,dt\sim\frac{\Gamma(1/2)}{\sqrt{x}}=\frac{\sqrt{\pi}}{\sqrt{x}}, \text{\;as\;}x\rightarrow0^+.
\]
Then, there exists $\delta>0$, where 
\begin{equation}\label{eq33}
(1-\varepsilon)\frac{\sqrt{\pi}}{\sqrt{x}}<\int_S^\infty e^{-x\sigma}\sigma^{-1/2}\,d\sigma< (1+\varepsilon)\frac{\sqrt{\pi}}{\sqrt{x}},
\end{equation}
for any $0<x<\delta$. Combining \eqref{eq32} and \eqref{eq33}, we get
\begin{equation}
\frac{(1-\varepsilon)^2C_\alpha}{\sqrt{2x}}\le \int_{S}^{\infty}A_1(\sigma)e^{-x\sigma}d\sigma\le\frac{(1+\varepsilon)^2C_\alpha}{\sqrt{2x}}, \text{\; for\;}0<x<\delta.
\end{equation}
Since \(\int_0^S A_1(\sigma)e^{-x\sigma}\,d\sigma=O(1)=o(x^{-1/2})\), the contribution from \([0,S]\) is negligible. 
Therefore, we obtain
\[
\mathcal{L}_1(\alpha,x)\sim \frac{C_\alpha}{\sqrt{2x}}.
\]
Evaluating this asymptotic expression for $x=\dfrac{1}{2\nu},\,\dfrac{1+2\mu}{2\nu},\,\dfrac{3+2\mu}{2\nu}$ and substituting the results into Eq. \eqref{eq12}, we obtain that
\begin{equation}
\beta_1(\alpha) \approx \frac{(1+2\mu)^2 + \mu\sqrt{3+2\mu} - (1+\mu)\sqrt{1+2\mu}}{\mu\sqrt{3+2\mu} + (1+\mu)\sqrt{1+2\mu} - (1+2\mu)}, \qquad\text{for}\quad\nu\rightarrow\infty.
\end{equation}

Similarly, we have
\begin{subequations}
\begin{align}
\mathcal{L}_2(\alpha,x)&\sim\frac{C_\alpha^2}{2\pi}\log\!\left(\frac{1}{x}\right),\\
\mathcal{L}_n(\alpha, x) &\sim \int_0^\infty [\Phi_\alpha(\sigma)]^n e^{-n\sigma}\, d\sigma \in (0,\infty),
\end{align}
\end{subequations}
for $n\geq3$. This will lead to 
\begin{equation}
\lim_{\nu\to\infty}\beta_n(\alpha)=1,
\end{equation}
for $n\geq2$.
\bibliographystyle{unsrt}
\bibliography{bibliography}

\end{document}